\theoremstyle{plain}
\newtheorem{theorem}{Theorem}[section]
\newtheorem{corollary}[theorem]{Corollary}
\newtheorem{lemma}[theorem]{Lemma}
\newtheorem{definition}[theorem]{Definition}
\newtheorem{assumption}[theorem]{Assumption}
\theoremstyle{remark}
\newtheorem{remark}{Remark}[section]
\newtheorem{example}{Example}[section]
\DeclareMathOperator{\supp}{supp}
\DeclareMathOperator{\dom}{dom}
\newcommand{\DF}{\mathcal{E}}
\newcommand{\domDF}{\mathcal{F}}
\newcommand{\Hil}{\mathcal{H}}
\newcommand{\Hsupp}{\supp_{\Hil}}
\begin{document}

\title{Magnetic fields on resistance spaces}
\author{Michael Hinz$^1$}
\address{$^2$ Department of Mathematics, Bielefeld University, Postfach 100131, 33501 Bielefeld, Germany }
\email{mhinz@math.uni-bielefeld.de}
\thanks{$^1$ Research supported in part by SFB 701 of the German Research Council (DFG)}

\author{Luke Rogers$^2$}
\address{$^2$ Department of Mathematics, University of Connecticut, Storrs, CT 06269-3009 USA}
\email{rogers@math.uconn.edu}
\date{\today}

\begin{abstract}
On a metric measure space $X$ that supports a regular, strongly local resistance form we consider a magnetic energy form that corresponds to the magnetic Laplacian for a particle confined to $X$. We provide sufficient conditions for closability and self-adjointness in terms of geometric conditions on the reference measure without assuming energy dominance.
\tableofcontents
\end{abstract}

\keywords{Resistance forms, Dirichlet forms, Magnetic Laplacians, Self-adjointness}
\subjclass[2010]{28A80, 47A07, 47A55, 81Q35}
\maketitle

\section{Introduction}
We study the magnetic Laplacian for a particle in a metric measure space $(X,\mu)$ that supports a regular, strongly local resistance form.  Roughly speaking, a resistance form is a Dirichlet form for which points have positive capacity, and which is determined by its finite-dimensional traces; the formal definition is in Section~\ref{background}.  On such a space there is a Hilbert space $\Hil$ of $1$-forms and a derivation operator $\partial$ that plays the role of a gradient.  In Section~\ref{sec:1-forms} we use these to define a magnetic operator $\partial+ia$, where $a\in\Hil$ is real-valued, and a magnetic (quadratic) form $\langle (\partial+ia)f,(\partial+ia)g\rangle_{\Hil}$.  Our main result, Theorem~\ref{thm:closability}, gives geometric conditions on the measure $\mu$ that suffice for closability of the magnetic form and consequently for the existence of a self-adjoint magnetic Laplacian $\Delta_{\mu,a}$.  The magnetic form and Laplacian have a gauge invariance property which is established in Section~\ref{sec:gauge}. Examples to which the theory may be applied are in Section~\ref{sec:examples}.  

These results complement earlier results of \cite{HTc} where self-adjointness and gauge invariance had been shown for magnetic Schr\"odinger operators in situations where the energy measures are absolutely continuous with respect to the given reference measure (energy dominance). The novelty in this paper is that for the resistance form case this assumption can be replaced by a uniform lower bound for the measure of balls or by a doubling condition. Therefore the present results apply to resistance forms on fractals even if energy and volume are mutually singular, \cite{BBST, Hino03, Hino05}; this case is not covered by \cite{HTc} and is interesting from a spectral theoretic perspective, \cite{FSh92, KiLa93}.

\subsection*{Acknowledgment}
The authors thank the anonymous referee for careful reading and helpful suggestions.

\section{Resistance forms}\label{background}
Following Kigami~\cite{Ki01,Ki03} we define a resistance form as follows.
\begin{definition}
A resistance form $(\DF,\domDF)$ on a  set $X$ is a pair such that:
\begin{enumerate}[{(RF}1)]
\item\label{RF1} $\domDF$ is a linear space of functions $X\to\mathbb{R}$ containing the constants. $\DF$ is a non-negative definite symmetric quadratic form on $\domDF$ with $\DF(u,u)=0$ if and only if $u$ is constant.
\item\label{RF2} The quotient of $\domDF$ by constants is a Hilbert space with norm $\DF(u,u)^{1/2}$.
\item If $v$ is a function on a finite subset $V\subset X$ there is $u\in\domDF$ so $u\,\bigr|_{V}=v$.
\item For $x,y\in X$
\begin{equation*}
	R(x,y)=\sup\Bigl\{ \frac{(u(x)-u(y))^{2}}{\DF(u,u)}:u\in\domDF, \DF(u,u)>0\Bigr\}<\infty.
\end{equation*}
\item\label{RF5} If $u\in\domDF$ then $\bar{u}=\max(0,\min(1,u(x)))\in\domDF$ and $\DF(\bar{u},\bar{u})\leq\DF(u,u)$.
\end{enumerate}
\end{definition}
We write $\DF(u)=\DF(u,u)$ and do similarly for other bilinear expressions. The main feature of resistance forms is that they are determined by a sequence of traces to finite subsets.
\begin{theorem}[\protect{\cite{Ki01,Ki03}}]\label{thm:propsofresistforms}
Resistance forms have the following properties.
\begin{enumerate}
\item $R(x,y)$ is a metric on $X$. Functions in $\domDF$ extend to the completion of $X$ and $(\DF,\domDF)$ is a resistance form on the completion, so we may assume $X$ is complete.
\item If $V\subset X$ is finite there is a trace $\DF_{V}$ of $\DF$ to $V$, which is a resistance form defined by
\begin{equation*}	
	\DF_{V}(v,v)=\inf\Bigl\{\DF(u,u):u\in\domDF, u\,\bigr|_{V}=v\Bigr\}
	\end{equation*}
in which the infimum is achieved at a unique $u$.  Also, if $V_{1}\subset V_{2}$ then $(\DF_{V_{2}})_{V_{1}}=\DF_{V_{1}}$.
\item If $(X,R)$ is separable and $\{V_{n}\}$ is an increasing (under inclusion) sequence of finite sets such that $\cup_{j}V_{j}$ is $R$-dense in $X$ then $\DF_{V_{n}}$ is non-decreasing and $\DF(u)=\lim_{n}\DF_{V_{n}}(u)$ for all $u\in\domDF$.
\item If $\{V_{n}\}$ is an increasing sequence of sets supporting resistance forms $\DF_{n}$ such that $(\DF_{n+1})_{V_{n}}=\DF_{n}$ for all $n$, then  for $u$ defined on  $V_{\ast}=\cup_{n}V_{n}$ the sequence $\DF_{n}(u)$ is non-decreasing and $\DF(u)=\lim_{n\to\infty}\DF_{n}(u)$ defines a resistance form with domain $\domDF=\{u:\DF(u)<\infty\}$.
\item Functions in $\domDF$ are $1/2$-H\"{o}lder in the resistance metric because from the definition of $R(x,y)$ they satisfy
\begin{equation}\label{E:resistest}
|f(x)-f(y)|\leq R(x,y)\mathcal{E}(f)^ {1/2}, \ \ f\in\domDF.
\end{equation}
\end{enumerate}
\end{theorem}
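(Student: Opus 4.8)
The five assertions are essentially Kigami's, so the plan is to reconstruct the standard arguments, all of which rest on the single estimate in part (5). I would derive that estimate first, since it is immediate: rearranging the supremum defining $R$ in (RF4) gives $(u(x)-u(y))^2 \le R(x,y)\,\DF(u)$ for every $u\in\domDF$, whence the $1/2$-H\"older bound \eqref{E:resistest}. This is the workhorse for everything else, because it says that the point evaluations $u\mapsto u(x)$ are continuous on the Hilbert space $\domDF$ modulo constants.

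Next I would treat part (2) by realizing the trace as a Hilbert space projection. Fixing a finite $V$ and prescribed boundary data $v$, part (5) shows the affine set $\{u\in\domDF : u|_V=v\}$ is closed, being cut out by finitely many continuous functionals; since $\DF^{1/2}$ is a Hilbert norm modulo constants, the energy is strictly convex and coercive there, so a unique minimizer (the harmonic extension of $v$) exists, giving $\DF_V$. The axioms (RF1)--(RF5) for $\DF_V$ then transfer from those of $\DF$: the Markov property (RF5) follows by applying the corresponding property of $\DF$ to the minimizer together with the fact that truncation does not raise energy, and the compatibility $(\DF_{V_2})_{V_1}=\DF_{V_1}$ is transitivity of nested infima.

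Parts (1) and (3) both build on (2). For the metric claim, symmetry and non-negativity are read off from the supremum, and $R(x,y)>0$ when $x\ne y$ follows from (RF3), which supplies a separating function. The genuinely delicate point, which I expect to be the main obstacle, is the triangle inequality $R(x,z)\le R(x,y)+R(y,z)$; here the clean route is to pass to the trace on $\{x,y,z\}$, which by (2) is a resistance form, equivalently a finite electrical network, for which effective resistance obeys the series law and hence the triangle inequality by classical network theory. The completion statement is then a uniform-continuity argument: by (5) each $u\in\domDF$ is $1/2$-H\"older in $R$ and so extends continuously to the completion, with the axioms passing to the limit. For part (3) I would fix $u$, let $u_n$ be the harmonic extension of $u|_{V_n}$, and note $\DF_{V_n}(u|_{V_n})=\DF(u_n)\le\DF(u)$, the sequence increasing because enlarging $V_n$ shrinks the admissible set in the defining infimum. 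Monotone boundedness yields a limit $\le\DF(u)$, and I would obtain the reverse inequality by extracting a weak limit of $\{u_n\}$, identifying it with $u$ on the dense set $\cup_jV_j$ via continuity of point evaluations, and applying weak lower semicontinuity of $\DF$.

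Finally, part (4) is the inductive-limit construction, where I anticipate the second real difficulty. Monotonicity of $\DF_n(u|_{V_n})$ is immediate from the compatibility $(\DF_{n+1})_{V_n}=\DF_n$, exactly as in (3). The work is to verify that $\DF(u)=\lim_n\DF_n(u|_{V_n})$, with domain $\{u:\DF(u)<\infty\}$, satisfies (RF1)--(RF5); the delicate axiom is the completeness (RF2) of the quotient, which I would establish using the uniform resistance estimate to show that a $\DF$-Cauchy sequence is pointwise Cauchy on $V_\ast$ and that its limit again has finite energy, computed as the monotone limit. The remaining axioms descend from their finite-level counterparts through this limit.
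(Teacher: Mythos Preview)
The paper does not supply its own proof of this theorem: it is stated with a citation to Kigami~\cite{Ki01,Ki03} and used as background, so there is nothing to compare your argument against beyond the original sources. Your sketch is correct and essentially reproduces Kigami's arguments (deriving~\eqref{E:resistest} first, realizing the trace as a projection onto harmonic extensions, proving the triangle inequality by reducing to the finite network on $\{x,y,z\}$, and handling the limits in (3)--(4) by monotonicity plus weak compactness and the pointwise control from the resistance estimate).
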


%At this juncture it is worth noting that the trace $\DF_{V_{n}}$ has the form
%\begin{equation}\label{eqn:traceform}
%	\DF_{V_{n}}(f) = \sum_{x,y\in V_{n}} c_{xy} (f(x)-f(y))^{2},
%	\end{equation}
%for certain values $c_{xy}$ called conductances.  Sometimes we write $c_{xy}(n)$ to emphasize the dependence on $V_n$.  In particular we have %approximations
%\begin{equation*}
%	\DF(f)=\lim_{n} \sum_{x,y\in V_{n}} c_{xy}(n)  (f(x)-f(y))^{2}.
%	\end{equation*}

\begin{definition}\label{defn:rdrs}
$(X,R,\mu)$ will be called a regular doubling resistance space if the following hold.
\begin{enumerate}
\item\label{one} There is a resistance form $(\DF,\domDF)$ on $X$, the metric space $(X,R)$ is separable, connected and locally compact, and the compactly supported functions in $\domDF$ are supremum-norm dense in $C_c(X)$.
\item\label{two} $\mu$ is  a non-atomic $\sigma$-finite Borel regular measure with $0<\mu(B(x,r))<\infty$ for all balls $B(x,r)$.
\item\label{three} $X$ is metrically doubling:  there is $C_{d}$ such that any ball $B(x,2r)$ can be covered by $C_{d}$ balls of radius $r$.
\end{enumerate}
\end{definition}

\begin{remark}\label{R:Dform}\mbox{}
\begin{enumerate}
\item[(i)] Condition (\ref{one}) implies that we consider a regular resistance form $(\mathcal{E},\mathcal{F})$,~\cite[Definition 6.2]{Ki12}. In particular, if $K\subset X$ is compact and $U\supset K$ is a relatively compact open neighborhood of $K$ then there exists a compactly supported function $\chi\in \mathcal{F}$ such that $\supp\chi\in U$, $0\leq\chi\leq 1$ on $X$ and $\chi\equiv 1$ on $K$. This follows from \cite[Theorem 6.3]{Ki12}. To such a function $\chi$ we refer as \emph{cut-off function for $K$ and $U$}.
\item[(ii)] If (\ref{one}) holds and in addition $\mu$ is a measure satisfying (\ref{two}) then the space $\mathcal{C}=\domDF\cap C_{c}(X)$ of compactly supported finite energy functions is dense in $L^{2}(\mu)$ and the closure of $(\DF, \mathcal{C})$ on $L^{2}(\mu)$ is a regular Dirichlet form $(\DF,\widetilde{\domDF})$, see~\cite[Theorem 8.4]{Ki12}.  In this case $\mathcal{C}$ is a form core, meaning that it is dense in $\widetilde{\domDF}$ with respect to the norm $(\DF+\|\cdot\|^{2}_{L_2(X,\mu)})^{1/2}$ and also in $C_{c}(X)$ with respect to the supremum norm $\left\|\cdot\right\|_{\sup}$. These facts do not require the metric doubling property (\ref{three}).  Note that we have introduced the notation $\widetilde{\domDF}$ for the domain of the Dirichlet form, which can differ from the resistance form domain $\domDF$ in the case that $(X,R)$ is non-compact. In the compact case the space $\mathcal{C}=\mathcal{F}$ equals $\widetilde{\mathcal{F}}$ (in the sense of distinguished representatives).
\end{enumerate}
\end{remark}
Henceforth we assume that $X$ is a regular doubling resistance space and that $\DF$ is strongly local.  The latter means that $\DF(u,v)=0$ when $u,v\in\domDF$ and $v$ is constant in each component of a neighborhood of $\supp(u)$.

We record the following easy estimate for later use.  For any open ball $B=B(z,r)$ write
\begin{equation*}
	f_{B}=\frac{1}{\mu(B)} \int_{B} f(y)\,d\mu(y),
	\end{equation*}
 and observe that the resistance estimate~\eqref{E:resistest} implies
\begin{equation}\label{eqn:poincare}
	|f(x)-f_{B}|
	\leq \frac{1}{\mu(B)}\int_{B} |f(x)-f(y)|\, d\mu(y)
	\leq \frac{1}{\mu(B)}\DF(f)^{1/2} \int_{B} R(x,y)^{1/2}\, d\mu(y)
	\leq \DF(f)^{1/2} r^{1/2}.
	\end{equation}
Evidently we could modify it by replacing $f_{B}$ by any value in $[\inf_{B}f,\sup_{B}f]$.

We will also need the extension of $\DF$ to complex-valued functions.  If $f=f_{1}+if_{2}$, $g=g_{1}+ig_{2}$ with both $f_{j}$ and $g_{j}$ real-valued elements of $\domDF$, one can set
\begin{equation*}
	\DF(f,g)=\DF(f_{1},g_{1})-i\DF(f_{1},g_{2}) + i\DF(f_{2},g_{1}) + \DF(f_{2},g_{2}).
	\end{equation*}
It is not difficult to check that this is conjugate symmetric and linear in the first variable. We refer to a form with these properties simply as 'quadratic form'. Moreover, the form $\DF$ is non-negative definite, $\DF(f)=\DF(f_{1})+\DF(f_{2})\geq0$. This is what we mean by $\DF(f)$ for a complex-valued $f$.  Observe that~\eqref{eqn:poincare} is still valid for complex-valued $f$.  In what  follows we repeatedly use the natural complexifications of $\DF$, $\domDF$, $L_2(X,\mu)$ and so on, they will be denoted by the same symbols. See \cite{HTc} for more details.

%For $\DF$ on complex-valued functions the Markov property (RF\ref{markovprop}) is no longer applicable.  However we will need an appropriate analogue.  Definition~\ref{defn:rdrs} and Remark~\ref{R:Dform} ensure equation (1.4.8) from~\cite{FOT94} is valid, so there are non-negative Radon measures $\sigma_{\beta}$ indexed by $\beta>0$ such that on a real-valued function $f$
%\begin{equation*}
%	\DF(f) = \lim_{\beta\to\infty} \Bigl( \frac{\beta}{2}\int (f(x)-f(y))^{2} \sigma_{\beta}(dx,dy) +  \beta\int (f(x))^{2} (1-s_{\beta}(x))\, \mu(dx) \Bigr)
%	\end{equation*}
%with $s_{\beta}=\sigma_{\beta}(X\times dy)/\mu(dy)$.
%From this it is immediate that for $f=f_{1}+if_{2}$ the form on $f$ is given by the same formula with $|f(x)-f(y)|^{2}$ and $|f(x)|^{2}$ in place of  $(f(x)-f(y))^{2}$ and $(f(x))^{2}$.  In particular $\DF(\Phi(f))\leq \DF(f)$ if $\Phi$ is any Lipschitz function on $\mathbb{C}$ with $\Phi(0)=0$ and Lipschitz constant $\|\Phi\|_{\text{Lip}}\leq1$.  We will have occasion to use a particular function $\Phi$ defined piecewise by $\Phi(z)=z$ for $|z|\leq1$ and $\Phi(z)=z/|z|$ for $|z|\geq1$, for which $\DF(\Phi(f))\leq\DF(f)$.

%%%%%%%%%%%%%%%%%%%%%%%%%

\section{$1$-forms and Magnetic operator}\label{sec:1-forms}
The core $\mathcal{C}=\domDF\cap C_{c}(X)$ is an algebra, and we recall (see, for example, Section~3.2 of~\cite{FOT94}) that regularity and strong locality ensure that associated to $f, g\in\widetilde{\domDF}$ there is a unique Radon measure $\Gamma(f,g)$ on $(X,R)$, called the energy measure, and satisfying
\begin{equation*}
	\mathcal{E}(fh,g)+\mathcal{E}(gh,f)-\mathcal{E}(fg,h)= 2\int_X h\:d\Gamma(f,g),\ \ h\in \mathcal{C}.
	\end{equation*}
In the case $f=g$ the measure is denoted $\Gamma(f)$ and is non-negative; also $\Gamma(f)(X)=\DF(f)$.  Note that no aspect of this construction depends on the measure $\mu$.

Using the energy measures we  can define a nonnegative bilinear form on $\mathcal{C}\otimes\mathcal{C}$ by setting
\[\left\langle a\otimes b, c\otimes d\right\rangle_\mathcal{H}:=\int_Xbd\:d\Gamma(a,c),\qquad a\otimes b, c\otimes d\in\mathcal{C}\otimes\mathcal{C}\] and extending by linearity.  Factoring out zero norm elements and completing yields a Hilbert space $(\mathcal{H},\left\langle \cdot,\cdot\right\rangle_\mathcal{H})$, referred to as the \emph{space of $1$-forms associated with $\mathcal{E}$}.  Note that $\Hil$ contains elements of the form $a\otimes b$ for any $a\in\widetilde{\domDF}$, $b\in L^{2}(d\Gamma(a))$.  In the context of Dirichlet and resistance forms this construction was introduced in \cite{CS03, CS09, IRT} and studied further in \cite{HRT, HT, HTc}.  It is important that $(ab)\otimes c-a\otimes bc-b\otimes ac=0$, see the proof of Theorem~2.7 of~\cite{IRT}.

The algebra $\mathcal{C}$ acts on $\mathcal{C}\otimes\mathcal{C}$ by
\begin{equation}\label{E:actions}
c(a\otimes b):=(ca)\otimes b - c\otimes (ab) \ \ \text{ and } \ \ (a\otimes b)d:=a\otimes (bd)
\end{equation}
for $a,b,c\in\mathcal{C}$ and bounded Borel functions $d$.  Moreover strong locality of $\DF$ implies the right and left actions coincide  (see Theorem~2.7 of~\cite{IRT}) and the definitions (\ref{E:actions}) extend continuously to a uniformly bounded action on $\mathcal{H}$, so if $c$ is bounded and continuous and $a\in\Hil$ then
\begin{equation}\label{E:boundedactions}
\left\| ca \right\|_\mathcal{H}\leq \left\| c\right\|_{\text{sup}}\left\| a\right\|_\mathcal{H}.
\end{equation}

A derivation $\partial: \mathcal{C}\to \mathcal{H}$ can be defined by setting 
\[\partial f:= f\otimes \mathds{1}.\] 
It satisfies the Leibniz rule,
\begin{equation}\label{E:Leibniz}
\partial(fg)=f\partial g + g\partial f, \ \ f,g \in \mathcal{C},
\end{equation}
and 
\begin{equation}\label{E:normandenergy}
\left\|\partial f\right\|_\mathcal{H}^2=\mathcal{E}(f), \ \ f\in\mathcal{C}.
\end{equation}
By the latter $\partial$ extends to a linear map $\partial:\widetilde{\domDF}\to\Hil$, and
as $(\mathcal{E},\widetilde{\domDF})$ is closed in $L_2(X,\mu)$, it may be viewed as an unbounded closed operator $\partial$ from $L_2(X,\mu)$ into $\mathcal{H}$ with domain $\widetilde{\domDF}$. Let $\partial^{\ast}_\mu$ denote its adjoint, so that for $f\in\mathcal{C}$
\begin{equation*}
	\partial^{\ast}_\mu h (f) = \langle h , \partial \bar{f}\rangle_{\Hil}
	\end{equation*}
and $\partial^{\ast}_\mu:\Hil\to\mathcal{C}^{\ast}$ is a bounded linear operator into the dual $\mathcal{C}^\ast$ of the space $\mathcal{C}$ topologized by $f\mapsto \bigl(\DF(f)+\|f\|_{L_2(X,\mu)}^{2}\bigr)^{1/2}$. By standard results $\partial^\ast_\mu$ defines a densely defined unbounded operator $\partial^\ast_\mu: \mathcal{H}\to L_2(X,\mu)$. Let $(\Delta_\mu, \dom\:\Delta_\mu)$ denote the infinitesimal generator of $(\mathcal{E},\widetilde{\domDF})$. Substituting $h=\partial g$ for some $g\in\dom(\Delta_\mu)$ entails $\partial g\in\dom \partial^\ast_\mu$ and  $\Delta_\mu g=-\partial^{\ast}_\mu\partial g$. More details are in~\cite{HRT} and ~\cite{HTc}.

For $f\in\widetilde{\domDF}$ we define the support of $f\otimes\mathds{1}\in\Hil$ to be the support of the measure $\Gamma(f)$ and denote it by $\Hsupp(f\otimes\mathds{1})$.  Recall that its complement is the union of those open sets $U$ with $\Gamma(f)(U)=0$. From Remark \ref{R:Dform} (i) and the Radon property we obtain $\Gamma(f)(U)=\sup \{\int |g|^{2}d\Gamma(f): g\in\mathcal{C},\ \supp(g)\subset U,\,|g|\leq1\}$.  Recalling that $\int|g|^{2}d\Gamma(f)=\|g\partial f\|_{\Hil}^{2}$ we may then extend the definition of $\Hsupp(a)$ to all $a\in\Hil$.
\begin{definition}
The support of $a\in\Hil$ is defined by setting the complement to be
\begin{equation}\label{eqn:Hsuppdefn}
	\Hsupp(a)^{c} = \cup\{ U: U\text{ is open and } \|g a\|_{\Hil}^{2}=0 \text{ for all $g\in\mathcal{C}$ with }\supp(g)\subset U\}.
	\end{equation}
\end{definition}
This notion allows for a generalization of~\eqref{E:boundedactions}. 
\begin{lemma}
If $\Hsupp(a)$ is compact and $g$ is continuous on $\Hsupp(a)$, then
\begin{equation}\label{eqn:boundedactionsforcompactsupport}
	\|ga\|_{\Hil}\leq \|a\|_{\Hil} \sup_{x\in\Hsupp(a)}|g(x)|.
	\end{equation}
\end{lemma}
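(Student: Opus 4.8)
The plan is to show that $\|ga\|_{\Hil}$ is insensitive to the values of $g$ away from $K:=\Hsupp(a)$, so that the global bound \eqref{E:boundedactions} can be localised. Writing $M=\sup_{x\in K}|g(x)|$, I would first use the Tietze extension theorem together with a cut-off function from Remark~\ref{R:Dform}(i) to replace $g$ by a compactly supported continuous $\tilde g$ on $X$ with $\tilde g=g$ on $K$; the element $ga$ is interpreted as $\tilde g a$. Everything rests on one localisation statement:
\emph{Claim.} If $h$ is bounded and continuous with $\supp h$ compact and $\supp h\cap K=\emptyset$, then $ha=0$.

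Granting the Claim, the estimate follows quickly. Given $\varepsilon>0$, the set $\{|\tilde g|<M+\varepsilon\}$ is an open neighbourhood of $K$, so I can choose a cut-off $\phi\in\mathcal{C}$ with $0\le\phi\le1$, $\phi\equiv1$ on an open neighbourhood $N$ of $K$, and $\supp\phi\subset\{|\tilde g|<M+\varepsilon\}$. Splitting $\tilde g=\tilde g\phi+\tilde g(1-\phi)$, the second piece vanishes on $N\supset K$, hence $\tilde g(1-\phi)$ is compactly supported with support in $K^{c}$, and the Claim gives $\tilde g(1-\phi)a=0$. For the first piece $\|\tilde g\phi\|_{\sup}\le M+\varepsilon$, because $|\tilde g|<M+\varepsilon$ on $\supp\phi$ and $|\phi|\le1$, so by linearity of the action and \eqref{E:boundedactions} we get $\|\tilde g a\|_{\Hil}=\|\tilde g\phi a\|_{\Hil}\le(M+\varepsilon)\|a\|_{\Hil}$. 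Letting $\varepsilon\to0$ yields \eqref{eqn:boundedactionsforcompactsupport}. The same Claim, applied to a soft truncation $\rho_{\varepsilon}\circ\delta$ of a difference $\delta$ of two compactly supported extensions (which vanishes on the open set $\{|\delta|<\varepsilon\}\supset K$ and converges uniformly to $\delta$), shows $\tilde g a$ is independent of the chosen extension, so the statement is well posed.

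To prove the Claim I would first treat $h\in\mathcal{C}$. Then $\supp h$ is compact and contained in $K^{c}$, which by \eqref{eqn:Hsuppdefn} is a union of open sets $U$ on each of which $\|ga\|_{\Hil}=0$ for every $g\in\mathcal{C}$ supported in $U$. Extracting a finite subcover $U_{1},\dots,U_{n}$, I build a partition of unity $\eta_{i}=\chi_{i}\prod_{j<i}(1-\chi_{j})\in\mathcal{C}$ subordinate to it from cut-off functions $\chi_{i}$ supplied by Remark~\ref{R:Dform}(i), using that $\mathcal{C}$ is an algebra. Since $\sum_{i}\eta_{i}\equiv1$ on $\supp h$ this gives $h=\sum_{i}h\eta_{i}$ with $h\eta_{i}\in\mathcal{C}$ and $\supp(h\eta_{i})\subset U_{i}$, so $\|ha\|_{\Hil}\le\sum_{i}\|(h\eta_{i})a\|_{\Hil}=0$. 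For general $h\in C_{c}(X)$ with $\supp h\subset K^{c}$ I would fix a cut-off $\theta\in\mathcal{C}$ equal to $1$ on $\supp h$ with $\supp\theta\subset K^{c}$, approximate $h$ uniformly by $h_{k}\in\mathcal{C}$ (Remark~\ref{R:Dform}(ii)), and observe that $\theta h_{k}\in\mathcal{C}$ is supported in $K^{c}$ with $\theta h_{k}\to h$ uniformly; the $\mathcal{C}$-case gives $(\theta h_{k})a=0$, while \eqref{E:boundedactions} gives $\|(h-\theta h_{k})a\|_{\Hil}\le\|h-\theta h_{k}\|_{\sup}\|a\|_{\Hil}\to0$, whence $ha=0$.

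The main obstacle is the Claim, specifically the passage from the single-set condition defining $\Hsupp(a)^{c}$ in \eqref{eqn:Hsuppdefn} to a statement about all functions supported in $K^{c}$; this is exactly what the compactness-plus-partition-of-unity argument handles, and it is where compactness of $\Hsupp(a)$ is used. A conceptually cleaner alternative, which I would at least mention, is to produce a finite Radon measure $\nu_{a}$ on $X$ with $\|ga\|_{\Hil}^{2}=\int_{X}|g|^{2}\,d\nu_{a}$ for $g\in\mathcal{C}$ (the energy measure of the form $a$, in the spirit of \cite{HRT}); then $\Hsupp(a)=\supp\nu_{a}$ and $\nu_{a}(X)=\|a\|_{\Hil}^{2}$, and \eqref{eqn:boundedactionsforcompactsupport} is immediate from $\int|g|^{2}\,d\nu_{a}=\int_{\Hsupp(a)}|g|^{2}\,d\nu_{a}\le(\sup_{\Hsupp(a)}|g|)^{2}\,\nu_{a}(X)$.
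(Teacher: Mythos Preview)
Your approach is essentially the same as the paper's: split $g$ as $g\phi+g(1-\phi)$ with $\phi$ a cut-off concentrated near $K=\Hsupp(a)$, argue that the piece supported away from $K$ annihilates $a$, and bound the remaining piece using \eqref{E:boundedactions} together with continuity of $g$ near $K$. The paper does exactly this but in three lines, simply asserting that $\|g(1-\chi)a\|_{\Hil}=0$ ``because of~\eqref{eqn:Hsuppdefn} and the fact that $g(1-\chi)=0$ on $V$''.

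The difference is that you actually justify this vanishing step (your ``Claim'') via a compactness--partition-of-unity reduction to functions supported in a single $U$ from the definition of $\Hsupp(a)^{c}$, and then a density argument to pass from $\mathcal{C}$ to $C_{c}(X)$. This is a genuine gap in the paper's proof that you have filled correctly: the definition~\eqref{eqn:Hsuppdefn} only gives vanishing for $g\in\mathcal{C}$ supported in one of the distinguished open sets, and some argument is needed to globalise. Your Tietze step and the independence-of-extension remark are also more careful than the paper, which tacitly treats $g$ as defined and continuous on all of $X$ (consistent with the lemma's only application, to $f\in\mathcal{C}$). The alternative via an energy measure $\nu_{a}$ is a nice observation and would indeed make the whole lemma immediate, though the paper does not introduce such a measure for general $a\in\Hil$.
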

\begin{proof}
Suppose $a\in\Hil$  and $g$ are as in the statement.  For $\epsilon>0$ choose an open neighborhood $U$ of $\Hsupp(a)$ on which $|g|\leq \epsilon+\sup_{\Hsupp(a)}|g|$. Let $V$ be an open neighborhood of $\Hsupp(a)$ with $\overline{V}\subset U$ and let $\chi$ be a cut-off for $\overline{V}$ and $U$.  Observe that $\|ga\|_{\Hil}\leq \|g\chi a\|_{\Hil}+\|g(1-\chi)a\|_{\Hil}$ and the latter term is zero because of~\eqref{eqn:Hsuppdefn} and the fact that $g(1-\chi)=0$ on $V$.  Thus $\|ga\|_{\Hil}\leq (\epsilon+\sup_{\Hsupp(a)}|g|)\|a\|_{\Hil}$ by~\eqref{E:boundedactions}.
\end{proof}

Let $a\in\mathcal{H}$ be a real vector field. We may regard $a$ as a mapping $a:\mathcal{C}\to\mathcal{H}$ by $f\mapsto fa$. A magnetic operator 
(deformed differential) $\partial_a:\mathcal{C}\to\mathcal{H}$ can be defined by
\[\partial_a:=(\partial+ia), \ \ f\mapsto \partial f+ifa,\ \ f\in\mathcal{C}.\]
It is not difficult to see that 
\[\mathcal{E}^a(f,g):=\left\langle \partial_a f, \partial_a g\right\rangle_\mathcal{H}\]
defines a non-negative definite quadratic form $\mathcal{E}^a$ on $\mathcal{C}$. We have $\mathcal{E}^a(f,g)=\mathcal{E}(f,g)+\mathcal{B}(f,g)$ with
\begin{equation}\label{E:B}
\mathcal{B}(f,g)=i \langle af,\partial g\rangle_{\Hil} -i \langle \partial f,ag\rangle_{\Hil} + \langle af,ag\rangle_{\Hil},
\end{equation}
and clearly also $\mathcal{B}$ is a quadratic form on $\mathcal{C}$.

\section{Closability and self-adjointness}

The goal of this section is to prove Theorem~\ref{thm:closability}, which gives sufficient conditions for $(\DF^{a},\widetilde{\domDF})$ to be a closed extension of $(\mathcal{E}^a,\mathcal{C})$ and as a consequence, the associated magnetic Laplacian to be self-adjoint. Most of the work occurs in Lemma~\ref{lem:boundfor|fa|}.
There are two cases in the theorem.  In the first we consider a general magnetic field $a\in\Hil$ and we assume a uniform lower estimate on the measure of balls:
\begin{equation}\label{eqn:loweruniform}
	m(r):=\inf_{x\in X} \mu(B(x,r)) >0
	\end{equation}
In the second case we restrict to a compactly supported magnetic field $a\in\Hil$ and instead assume $\mu$ is a doubling measure
\begin{equation}\label{eqn:doublemeasure}
	C_{\mu}:=\sup\biggl\{ \frac{\mu(B(x,2r))}{\mu(B(x,r))}: x\in X, r> 0 \biggr\} <\infty.
\end{equation}
Note that~\eqref{eqn:doublemeasure} implies that the space is metrically doubling (see Definition~\ref{defn:rdrs}) but does not imply~\eqref{eqn:loweruniform}.

\begin{theorem}\label{thm:closability}
Suppose $(X,R,\mu)$ is a regular doubling resistance space and let $a\in\mathcal{H}$ be a real vector field.  Further assume that either (i)  $\mu$ has the  lower uniform estimate~\eqref{eqn:loweruniform} or (ii) $\mu$ is doubling as in~\eqref{eqn:doublemeasure} and $\Hsupp(a)$ is compact.  Then $(\mathcal{E}^a,\mathcal{C})$ extends to a closed quadratic form $(\mathcal{E}^a,\widetilde{\mathcal{F}})$ on $L_2(X,\mu)$ and consequently there is a unique non-positive definite self-adjoint operator $(\Delta_{\mu,a}, \dom\:\Delta_{\mu,a})$ such that 
\[\mathcal{E}^a(f,g)=-\left\langle \Delta_{\mu,a}f, g\right\rangle_{L_2(X,\mu)}\] for any $f\in \dom\:\Delta_{\mu,a}$ and $g\in\widetilde{\mathcal{F}}$.
\end{theorem}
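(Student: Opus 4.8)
The plan is to treat $\mathcal{E}^a=\mathcal{E}+\mathcal{B}$ as a form-bounded perturbation of the closed form $(\mathcal{E},\widetilde{\mathcal{F}})$ and then appeal to the representation theorem for closed nonnegative forms. Writing $f=f$ in~\eqref{E:B} and using conjugate symmetry of $\langle\cdot,\cdot\rangle_{\mathcal{H}}$ one sees that the perturbing form is real,
\[
\mathcal{B}(f)=-2\operatorname{Im}\langle af,\partial f\rangle_{\mathcal{H}}+\|af\|_{\mathcal{H}}^2 ,
\]
so everything is governed by the multiplication map $f\mapsto fa$. The decisive reduction is therefore to the infinitesimal form bound
\[
\|fa\|_{\mathcal{H}}^2\leq \epsilon\,\mathcal{E}(f)+C_\epsilon\,\|f\|_{L_2(X,\mu)}^2,\qquad f\in\mathcal{C},
\]
valid for every $\epsilon>0$ with $C_\epsilon$ depending on $\epsilon$, on $a$, and on the geometry; this is precisely the content of Lemma~\ref{lem:boundfor|fa|} and is where the real work lies.

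Granting this bound, the passage to closedness and self-adjointness is routine. The cross term is controlled by the Cauchy--Schwarz inequality in $\mathcal{H}$, the identity $\|\partial f\|_{\mathcal{H}}=\mathcal{E}(f)^{1/2}$ from~\eqref{E:normandenergy}, and $2st\leq\delta s^2+\delta^{-1}t^2$:
\[
|{-2\operatorname{Im}\langle af,\partial f\rangle_{\mathcal{H}}}|\leq 2\|fa\|_{\mathcal{H}}\,\mathcal{E}(f)^{1/2}\leq \delta\,\mathcal{E}(f)+\delta^{-1}\|fa\|_{\mathcal{H}}^2 .
\]
Substituting the form bound yields $|\mathcal{B}(f)|\leq\alpha\,\mathcal{E}(f)+\beta\,\|f\|_{L_2(X,\mu)}^2$ with $\alpha=\delta+(1+\delta^{-1})\epsilon$, which is made strictly less than $1$ by taking $\delta$ and then $\epsilon$ small. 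Consequently
\[
(1-\alpha)\mathcal{E}(f)-\beta\|f\|_{L_2(X,\mu)}^2\leq \mathcal{E}^a(f)\leq (1+\alpha)\mathcal{E}(f)+\beta\|f\|_{L_2(X,\mu)}^2 ,
\]
so the norm $(\mathcal{E}^a(\cdot)+(\beta+1)\|\cdot\|_{L_2(X,\mu)}^2)^{1/2}$ is equivalent to the form norm $(\mathcal{E}(\cdot)+\|\cdot\|_{L_2(X,\mu)}^2)^{1/2}$ on $\widetilde{\mathcal{F}}$. Hence $(\mathcal{E}^a,\widetilde{\mathcal{F}})$ is closed, and since $\mathcal{C}$ is a form core for $(\mathcal{E},\widetilde{\mathcal{F}})$ by Remark~\ref{R:Dform}(ii) it remains dense for this equivalent norm, so $(\mathcal{E}^a,\widetilde{\mathcal{F}})$ is exactly the closure of $(\mathcal{E}^a,\mathcal{C})$. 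As $\mathcal{E}^a$ is nonnegative and closed, the representation theorem (the KLMN situation) furnishes the unique non-positive self-adjoint $\Delta_{\mu,a}$ with $\mathcal{E}^a(f,g)=-\langle\Delta_{\mu,a}f,g\rangle_{L_2(X,\mu)}$.

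The genuine obstacle is thus the form bound for $\|fa\|_{\mathcal{H}}$, and this is where the two hypotheses enter. I would first record that $g\mapsto\|ga\|_{\mathcal{H}}^2$ is represented by a finite nonnegative Borel measure $\nu_a$, namely $\|ga\|_{\mathcal{H}}^2=\int_X|g|^2\,d\nu_a$ with $\nu_a(X)=\|a\|_{\mathcal{H}}^2$, extending the relation $\int|g|^2\,d\Gamma(u)=\|g\partial u\|_{\mathcal{H}}^2$ already used around~\eqref{eqn:Hsuppdefn}; the claim then becomes the trace-type inequality $\int|f|^2\,d\nu_a\leq\epsilon\,\mathcal{E}(f)+C_\epsilon\|f\|_{L_2(X,\mu)}^2$. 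The mechanism is the resistance--Poincar\'e estimate~\eqref{eqn:poincare}: covering the relevant region by balls $B$ of small radius $r$, split $f=f_B+(f-f_B)$ on each $B$, bound the oscillation by $|f-f_B|^2\leq r\,\mathcal{E}(f)$ and the average by $|f_B|^2\leq\mu(B)^{-1}\int_B|f|^2\,d\mu$, and sum with bounded overlap supplied by metric doubling. The oscillation part contributes a term of order $r\,\nu_a(X)\,\mathcal{E}(f)$, which is the infinitesimal piece tending to $0$ as $r\to0$, while the averaging part yields the $L_2(X,\mu)$ piece whose constant involves $\nu_a(B)/\mu(B)$ and blows up as $r\to0$. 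Here case (i) provides the uniform lower bound $\mu(B)\geq m(r)>0$ needed to control $\mu(B)^{-1}$ globally, whereas in case (ii) the compactness of $\Hsupp(a)$ confines $\nu_a$ to a compact set covered by finitely many balls whose volumes are bounded below, with the doubling property~\eqref{eqn:doublemeasure} furnishing the uniform local volume comparisons and overlap control. Making the bookkeeping among $r$, the overlap constant, and $\nu_a$ uniform is the delicate step, and the support notion together with~\eqref{eqn:boundedactionsforcompactsupport} is exactly what localizes the estimate in the compact case.
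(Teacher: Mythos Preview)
Your proposal is correct and follows the same overall architecture as the paper: reduce to an infinitesimal form bound for $\|fa\|_{\mathcal{H}}^2$ (Lemma~\ref{lem:boundfor|fa|}), control the cross term via Cauchy--Schwarz and Young, and then invoke KLMN. The covering-plus-Poincar\'e mechanism you describe for the lemma is also precisely what the paper does.

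The one genuine difference is in how you package the lemma. You introduce a measure $\nu_a$ with $\|ga\|_{\mathcal{H}}^2=\int|g|^2\,d\nu_a$ and phrase the estimate as a trace inequality $\int|f|^2\,d\nu_a\leq\epsilon\,\mathcal{E}(f)+C_\epsilon\|f\|_{L_2(\mu)}^2$. This is legitimate, but the existence of $\nu_a$ for a general $a\in\mathcal{H}$ is an extra fact you would need to justify (it holds because $\mathcal{H}$ is a module over $C_c(X)$ and one can approximate $a$ by simple tensors, but it is not stated in the paper). The paper sidesteps this entirely: it proves a \emph{pointwise} bound~\eqref{eqn:LinftybdfromDFandL2} on $|f(x)|^2$ (globally in case~(i), on a neighborhood of $\Hsupp(a)$ in case~(ii)) and then feeds this directly into the module inequalities~\eqref{E:boundedactions} or~\eqref{eqn:boundedactionsforcompactsupport}. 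That route is shorter and uses only what has already been set up; your $\nu_a$ formulation is conceptually cleaner and would generalize better to settings where a uniform sup-norm bound is unavailable, but here it adds a step.
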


We regard this operator $\Delta_{\mu,a}$ as the \emph{magnetic Laplacian with vector potential $a$} generated by $(\mathcal{E},\mathcal{F})$ and $\mu$. Theorem \ref{thm:closability} should be compared to the results of~\cite{HTc}, which permit the definition of a self-adjoint magnetic operator on the space $L^{2}(\Gamma)$ where $\Gamma$ is an energy-dominant measure (i.e. a measure such that all energy measures $\Gamma(f,g)$, $f,g\in\domDF$ are absolutely continuous with respect to $\Gamma$ and have bounded Radon-Nikodym derivatives).  Theorem~\ref{thm:closability} is applicable to a much wider class of measures that may be more natural on the space $X$, even though they could be singular to the energy measures, \cite{BBST, Hino03, Hino05}. In particular it is applicable to certain fractal sets $X$ which are known to support resistance forms (see~\cite{Ki01}) in the case where $\mu$ is a Hausdorff measure with respect to the resistance metric.

\begin{lemma}\label{lem:boundfor|fa|}
Let $M>0$. Under either of the assumptions of Theorem~\ref{thm:closability} the mapping $f\mapsto fa$ extends to $\widetilde{\mathcal{F}}$ and there is a constant $C_{a,M}$ depending only on $a\in\Hil$, $M$ and the properties of $X$ such that if $f\in\widetilde{\domDF}$ then
\begin{equation}\label{eqn:boundfor|fa|}
	\bigl\|fa\bigr\|_{\Hil}^2
	\leq \frac{1}{M}\DF(f) + C_{a,M}\|a\|_{\Hil}^{2} \|f\|_{L^{2}(X, \mu)}^{2}.
\end{equation}
\end{lemma}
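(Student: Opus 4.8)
The plan is to establish \eqref{eqn:boundfor|fa|} first for $f\in\mathcal{C}$ and then extend to $\widetilde{\domDF}$ by density. For $f\in\mathcal{C}$ the quantity $\|fa\|_{\Hil}^{2}$ is controlled by a single measure. Indeed, just as $\|g\partial f\|_{\Hil}^{2}=\int|g|^{2}\,d\Gamma(f)$, one has for a $1$-form $\omega=a'\otimes b$ and the right action $\|g\omega\|_{\Hil}^{2}=\int|g|^{2}|b|^{2}\,d\Gamma(a')$; since the left and right actions coincide by strong locality, this identifies $\|g\omega\|_{\Hil}^{2}$ with integration of $|g|^{2}$ against the finite measure $|b|^{2}\Gamma(a')$, and passing to limits gives, for every $a\in\Hil$, a finite Radon measure $\nu_{a}$ with $\|ga\|_{\Hil}^{2}=\int|g|^{2}\,d\nu_{a}$ for $g\in\mathcal{C}$, $\nu_{a}(X)=\|a\|_{\Hil}^{2}$ and $\supp\nu_{a}=\Hsupp(a)$. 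Thus $\|fa\|_{\Hil}^{2}=\int_{X}|f|^{2}\,d\nu_{a}$, and it suffices to bound this integral by the right-hand side of \eqref{eqn:boundfor|fa|}. (Alternatively the localization below can be carried out directly in $\Hil$ by means of \eqref{eqn:boundedactionsforcompactsupport}.)

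The estimate is obtained by a covering argument at a single, carefully chosen scale. Using the metric doubling property of Definition~\ref{defn:rdrs}(\ref{three}) I fix $r>0$ and take $\{x_{i}\}$ to be a maximal $r$-separated set, so that the balls $B_{i}=B(x_{i},r)$ cover $X$ with overlap bounded by a constant $N=N(C_{d})$ independent of $r$; in case (ii) only finitely many $B_{i}$ meet the compact set $\supp\nu_{a}=\Hsupp(a)$. On each $B_{i}$ I split $|f|^{2}\leq 2|f-f_{B_{i}}|^{2}+2|f_{B_{i}}|^{2}$ and integrate against $\nu_{a}$. For the oscillation term the resistance/Poincar\'e estimate \eqref{eqn:poincare} gives $\sup_{B_{i}}|f-f_{B_{i}}|\leq c\,\DF(f)^{1/2}r^{1/2}$, so summing and using the bounded overlap $\sum_{i}\nu_{a}(B_{i})\leq N\nu_{a}(X)=N\|a\|_{\Hil}^{2}$ yields a contribution at most $c^{2}Nr\,\|a\|_{\Hil}^{2}\DF(f)$. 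The decisive step is to now choose $r$ small, $r=\bigl(2c^{2}NM\|a\|_{\Hil}^{2}\bigr)^{-1}$ (the case $a=0$ being trivial), so that this is at most $\tfrac{1}{M}\DF(f)$; it is precisely the freedom to make this coefficient arbitrarily small that will furnish a relative form bound strictly less than one and hence closability.

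For the averaging term, Cauchy--Schwarz gives $|f_{B_{i}}|^{2}\leq\mu(B_{i})^{-1}\int_{B_{i}}|f|^{2}\,d\mu$, so its total is at most $2\sum_{i}\frac{\nu_{a}(B_{i})}{\mu(B_{i})}\int_{B_{i}}|f|^{2}\,d\mu$. With $r$ now fixed, I bound the ball measures from below: under \eqref{eqn:loweruniform} one has $\mu(B_{i})\geq m(r)>0$ uniformly over the (possibly infinitely many) balls, whereas under \eqref{eqn:doublemeasure} the compactness of $\Hsupp(a)$ leaves only finitely many relevant balls and $m_{0}:=\min_{i}\mu(B_{i})>0$ by Definition~\ref{defn:rdrs}(\ref{two}). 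Combining this with $\nu_{a}(B_{i})\leq\|a\|_{\Hil}^{2}$ and $\sum_{i}\int_{B_{i}}|f|^{2}\,d\mu\leq N\|f\|_{L^{2}(X,\mu)}^{2}$ bounds the averaging term by $C_{a,M}\|a\|_{\Hil}^{2}\|f\|_{L^{2}(X,\mu)}^{2}$ with $C_{a,M}=2N/m(r)$ (respectively $2N/m_{0}$), a constant depending only on $a$, on $M$ and on the data of $X$. Adding the two contributions proves \eqref{eqn:boundfor|fa|} for $f\in\mathcal{C}$.

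Finally, \eqref{eqn:boundfor|fa|} shows that $f\mapsto fa$ is bounded from $\mathcal{C}$, equipped with the norm $(\DF(f)+\|f\|_{L^{2}(X,\mu)}^{2})^{1/2}$, into $\Hil$; since $\mathcal{C}$ is a form core (Remark~\ref{R:Dform}(ii)) the map extends uniquely to $\widetilde{\domDF}$ and the inequality persists by continuity. I expect the crux to be the balancing in the second paragraph combined with the lower bound on $\mu(B_{i})$: one must shrink $r$ enough to absorb the energy term with the prescribed coefficient $1/M$, while still keeping $\inf_{i}\mu(B_{i})>0$ at that fixed scale. This is automatic under the uniform lower bound \eqref{eqn:loweruniform}, but in the doubling case it genuinely relies on the compactness of $\Hsupp(a)$ to reduce to a finite cover. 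A secondary point needing care is the identification of $\|ga\|_{\Hil}^{2}$ with $\int|g|^{2}\,d\nu_{a}$ for a general $1$-form, equivalently the verification that \eqref{eqn:boundedactionsforcompactsupport} suffices to localize the argument in $\Hil$.
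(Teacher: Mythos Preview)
Your proof is correct and follows essentially the same strategy as the paper's: cover at a single scale $r$ chosen so that the Poincar\'e oscillation term contributes at most $\tfrac{1}{M}\DF(f)$, control the averages via a lower bound on $\mu(B_i)$, and extend from $\mathcal{C}$ to $\widetilde{\domDF}$ by density. The paper phrases the core step as a pointwise bound on $|f(x)|^{2}$ (over $X$, respectively over a neighborhood of $\Hsupp(a)$) and then invokes \eqref{E:boundedactions} or \eqref{eqn:boundedactionsforcompactsupport}, whereas you integrate the same decomposition against the measure $\nu_{a}$; these are equivalent, and you even flag the alternative. One incidental observation: in case~(ii) your argument uses only compactness of $\Hsupp(a)$ and positivity of $\mu$ on balls, not the doubling hypothesis \eqref{eqn:doublemeasure}, so it is in fact slightly more economical than the paper's iteration of doubling to obtain a uniform lower bound on $\mu(B(x,r))$ over $B(x_{0},\rho)$.
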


\begin{proof}
Fix %$0<r< (8C_{d}\DF(a))^{-1}$.  Changed to following:
$0<r\leq(4MC_{d}\|a\|_{\Hil}^{2}))^{-1}$.  It is easy to see that a maximal set $\{B(x_{j},r)\}$ of disjoint balls has the property that $\cup_{j}B(x_{j},2r)\supset X$.  Moreover the metric doubling property of $X$ implies $\sum_{j}\mathds{1}_{B(x_{j},2r)}\leq C_{d}$:
if $x_{j_{1}},\dotsc,x_{j_{n}}\in B(x,2r)$ then covering by $C_{d}$ balls $B(y_{k},r)$ we see each $x_{j_{l}}$ is in some $B(y_{k},r)$ but no two can be in the same $B(y_{k},r)$ else $y_{k}\in B(x_{j_{l}},r)\cap B(x_{j_{l'}},r)$ contradicts disjointness, so $n\leq C_{d}$.

Assume first that $f\in\mathcal{C}$. We use the cover to estimate $f(x)^{2}$.  Note from~\eqref{eqn:poincare} that if $x\in B_{j}= B(x_{j},2r)$ then
\begin{equation*}
	|f(x)|^{2} \leq 2 |f(x)-f_{B_{j}}|^{2} + 2(f_{B_{j}})^{2} \leq 4\DF(f) r + 2 (f^{2})_{B_{j}}
	\end{equation*}
where the last term was estimated by Jensen's inequality.  Then for any $x$
\begin{equation*}
	|f(x)|^{2}
	\leq \sum_{j} |f(x)|^{2}\mathds{1}_{B_{j}}(x)
	\leq \sum_{j} \bigl( 4\DF(f) r + 2 (f^{2})_{B_{j}} \bigr)\mathds{1}_{B_{j}}(x)
	\leq 4C_{d} \DF(f) r + \sum_{j} 2 (f^{2})_{B_{j}} \mathds{1}_{B_{j}}(x).
	\end{equation*}
If $\mu$ has the lower uniformity property~\eqref{eqn:loweruniform} then the last term is bounded by $2C_{d}\|f\|_{L_2(X,\mu)}^{2}(m(r))^{-1}$.  If, instead, $\mu$ is doubling and $a$ has compact support then let $B(x_{0},\rho)$ contain $\Hsupp(a)$, take $k$ so $2^{k}r>2\rho$ and for $x\in B(x_{0},\rho)$  iterate the doubling estimate~\eqref{eqn:doublemeasure} to see $\mu(B(x,r))\geq C_{\mu}^{-k}\mu(B(x_{0},\rho))$. Then the last term in the above equation is bounded by $2C_{d} C_{\mu}^{k} \|f\|_{L_2(X,\mu)}^{2}\mu(B(x_{0},\rho))^{-1}$ if $x\in B(x_{0},\rho)$.  Therefore we have a value $C_{a,M}$ depending on $a, M$ and on the properties of $X$ such that, after using our choice of $r$ to simplify the first term,
\begin{equation}\label{eqn:LinftybdfromDFandL2}
	|f(x)|^{2} \leq \frac{\DF(f)}{M\|a\|_{\Hil}^{2}} + C_{a,M} \|f\|_{L^{2}(X,\mu)}^{2}
	\end{equation}
holds for all $x\in X$ under the lower uniformity assumption, or for all $x$ in a neighborhood of the compact set $\Hsupp(a)$ under the doubling assumption.  The proof of (\ref{eqn:boundfor|fa|}) for $f\in\mathcal{C}$ is now completed using~\eqref{E:boundedactions} in the former case and~\eqref{eqn:boundedactionsforcompactsupport} in the latter case. For general $f\in\widetilde{\mathcal{F}}$ (\ref{eqn:boundfor|fa|}) follows by approximation according to Remark \ref{R:Dform} (ii).
\end{proof}

\begin{proof}[\protect{Proof of Theorem~\ref{thm:closability}}]
%Apply Cauchy-Schwarz to the cross-term of
Note first that by Lemma \ref{lem:boundfor|fa|} the quadratic form
\begin{equation}\label{E:temporaryB}
\mathcal{B}(f)=\mathcal{E}^a(f)-\mathcal{E}(f)=2\mathfrak{R}(i\left\langle \partial f, fa\right\rangle_\mathcal{H})+\left\|fa\right\|_\mathcal{H}^2
\end{equation}
is defined for all $f\in\widetilde{\mathcal{F}}$. Applying Cauchy-Schwarz to the cross-term in (\ref{E:temporaryB}), we obtain
\[|\mathcal{B}(f)|\leq \frac14\mathcal{E}(f)+5\left\|fa\right\|_\mathcal{H}^2,\]
which by~\eqref{eqn:boundfor|fa|} yields 
\begin{equation}\label{E:formbound}
|\mathcal{B}(f)|\leq \varepsilon\:\mathcal{E}(f)+C\left\|f\right\|_{L_2(X,\mu)}^2,\ \ f\in\widetilde{\mathcal{F}},
\end{equation}
with positive constants $C:=5C_{a,M}\left\|a\right\|_{\mathcal{H}}^2$ and $\varepsilon:=(4^{-1}+M^{-1}+4M^{-1})<1$, provided $M>20/3$. Now the result follows from the classical KLMN theorem (\cite[Theorem X.17]{RS}).
\end{proof}

Recall that in the definition of $\partial_{a}$ we treat $a\in\Hil$  as a linear  operator $a:\mathcal{C}\to\Hil$ via $f\mapsto fa$ with the bound $\|fa\|_{\Hil}\leq \|f\|_{\sup} \|a\|_{\Hil}$.  From the proof of Lemma~\ref{lem:boundfor|fa|} we see that either $\|f\|_{\sup}^{2}\leq 4C_{d}\DF(f)r + 2C_{d}(m(r))^{-1}\|f\|_{L_2(X,\mu)}^{2}$ or $\|f\|_{\sup}^{2}\leq 4C_{d}\DF(f)r + 2C_{d}C_{\mu}^{k}\mu(B(x_{0},\rho))^{-1}\|f\|_{L_2(X,\mu)}^{2}$.  In either case we can optimize over $r$ to obtain 
\[\|f\|_{\sup}\leq C(\DF(f)+\|f\|_{L_2(X,\mu)}^{2})^{1/2}.\]  
This allows us to define an adjoint operator $a^{\ast}_{\mu}:\Hil\to\mathcal{C}^{\ast}$ simply as $h\mapsto a^{\ast}_{\mu}h$, where for $f\in\mathcal{C}$ we set 
\begin{equation*}
a^{\ast}_{\mu}h(f)=\langle h , \bar{f}a  \rangle_{\Hil}.
\end{equation*}
Both the fact that this defines $a^{\ast}h$ as an element of $\mathcal{C}^{\ast}$ and the boundedness of the map $a^{\ast}_{\mu}$ follow from
\begin{equation*}
	|\langle fa, \bar{h}\rangle_{\Hil}|
	\leq \|f\|_{\sup}\|a\|_{\Hil}\|h\|_{\Hil}
	\leq C(\DF(f)+\|f\|_{L_2(X,\mu)}^{2})^{1/2} \|a\|_{\Hil}\|h\|_{\Hil}
	\end{equation*}
which was obtained using Cauchy-Schwarz, \eqref{E:boundedactions} and our bound for $\|f\|_{\sup}$.

It is then natural to define an adjoint of $\partial_{a}$.  We set $\partial_{\mu,a}^{\ast}=(\partial+ia)^{\ast}_{\mu}$, so that
\begin{equation*}
	\DF^{a}(f,g) = \langle (\partial+ia)f,(\partial+ia)g\rangle_{\Hil}
	=  (\partial^{\ast}_{\mu} - ia^{\ast}_{\mu}) \bigl((\partial+ia)f\bigr)(\bar{g}) 
	=\partial_{\mu,a}^{\ast} \partial_{a} f (\bar{g})
	\end{equation*}
for any $f,g\in\mathcal{C}$ and therefore 
\[\Delta_{\mu,a}f=-\partial_{\mu,a}^{\ast} \partial_{a}f, \ \ f\in\mathcal{C},\]
seen as an identity in $\mathcal{C}^\ast$. For $f\in\dom\Delta_{\mu,a}$ it can be interpreted as an identity in $L_2(X,\mu)$. We record a simple fact about the spectra of $\Delta_\mu$ and $\Delta_{a,\mu}$.

\begin{theorem}\label{thm:compactness}
Let the hypotheses of Theorem~\ref{thm:closability} be in force. If $\Delta_{\mu}$ has pure point spectrum, then also $\Delta_{\mu,a}$ has compact resolvent and therefore pure point spectrum with eigenvalues $0\geq\nu_{1}\geq\nu_{2}\geq \dots$ accumulating only at $-\infty$.
\end{theorem}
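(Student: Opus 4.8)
The plan is to deduce that $\Delta_{\mu,a}$ has compact resolvent from the single structural fact that the magnetic form $(\mathcal{E}^a,\widetilde{\mathcal{F}})$ shares its form domain $\widetilde{\mathcal{F}}$ with $(\mathcal{E},\widetilde{\mathcal{F}})$ and carries an equivalent form norm. The key tool is the standard criterion that a self-adjoint operator which is bounded above has compact resolvent if and only if the inclusion of its form domain, equipped with the form norm, into $L_2(X,\mu)$ is a compact map (equivalently, the unit ball of the form domain is precompact in $L_2(X,\mu)$). Since $\Delta_\mu$ is the non-positive generator of the Dirichlet form $\mathcal{E}\geq 0$, the hypothesis that it has pure point spectrum accumulating only at $-\infty$ is exactly the statement that it has compact resolvent, so this criterion yields that the inclusion $j\colon(\widetilde{\mathcal{F}},\|\cdot\|_{\mathcal{E}_1})\hookrightarrow L_2(X,\mu)$ is compact, where $\|f\|_{\mathcal{E}_1}^2=\mathcal{E}(f)+\|f\|_{L_2(X,\mu)}^2$.

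Next I would record the two-sided comparison of forms coming from the proof of Theorem~\ref{thm:closability}. Writing $\mathcal{E}^a(f)=\mathcal{E}(f)+\mathcal{B}(f)$ and using the bound \eqref{E:formbound}, namely $|\mathcal{B}(f)|\leq\varepsilon\,\mathcal{E}(f)+C\|f\|_{L_2(X,\mu)}^2$ with $\varepsilon<1$, one immediately obtains
\[
(1-\varepsilon)\mathcal{E}(f)-C\|f\|_{L_2(X,\mu)}^2\leq \mathcal{E}^a(f)\leq (1+\varepsilon)\mathcal{E}(f)+C\|f\|_{L_2(X,\mu)}^2.
\]
Adding $(C+1)\|f\|_{L_2(X,\mu)}^2$ throughout and using $\varepsilon<1$ shows that the norm $\bigl(\mathcal{E}^a(f)+(C+1)\|f\|_{L_2(X,\mu)}^2\bigr)^{1/2}$ is equivalent to $\|f\|_{\mathcal{E}_1}$ on $\widetilde{\mathcal{F}}$. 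Because $\mathcal{E}^a\geq 0$, this norm is in turn equivalent to the genuine form norm $\bigl(\mathcal{E}^a(f)+\|f\|_{L_2(X,\mu)}^2\bigr)^{1/2}$ of $\Delta_{\mu,a}$, the harmless shift by $(C+1)\|\cdot\|_{L_2(X,\mu)}^2$ affecting neither the topology nor compactness.

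Equivalent norms on the source of a linear map do not affect its compactness, so the inclusion of the magnetic form domain into $L_2(X,\mu)$ coincides with $j$ up to a renorming of the source and is therefore compact. Applying the compact-resolvent criterion in the reverse direction to the non-positive self-adjoint operator $\Delta_{\mu,a}$ produced in Theorem~\ref{thm:closability} shows that $\Delta_{\mu,a}$ has compact resolvent. Non-positivity then forces the spectrum to be a sequence of eigenvalues of finite multiplicity $0\geq\nu_1\geq\nu_2\geq\cdots$ with $\nu_n\to-\infty$, which is the assertion.

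The main obstacle is conceptual rather than computational: one must invoke the characterization of compact resolvent through the form domain instead of attempting a perturbative comparison of resolvents, since $\mathcal{B}$ is only form-bounded with relative bound less than one and is not in general relatively form-compact, so the usual stability of essential spectrum under relatively compact perturbations is unavailable. The decisive point is that compactness of the form embedding is an intrinsic property of the common domain $\widetilde{\mathcal{F}}$ that is preserved under any equivalent form norm, and the role of \eqref{E:formbound} is precisely to supply that equivalence. A minor matter is to confirm that in the present setting the hypothesis on $\Delta_\mu$ amounts to compactness of its form-domain embedding into $L_2(X,\mu)$; on a compact resistance space this holds automatically via the resistance estimate \eqref{E:resistest} and an Arzel\`a--Ascoli argument.
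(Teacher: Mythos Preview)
Your proof is correct and follows essentially the same route as the paper: both deduce from the pure point spectrum of $\Delta_\mu$ that the embedding $(\widetilde{\mathcal{F}},\|\cdot\|_{\mathcal{E}_1})\hookrightarrow L_2(X,\mu)$ is compact, observe via the form bound~\eqref{E:formbound} that the $\mathcal{E}^a$-norm is equivalent to the $\mathcal{E}$-norm on $\widetilde{\mathcal{F}}$, and then apply the compact-embedding criterion in reverse to $\Delta_{\mu,a}$. The paper's proof is simply a terser statement of the same argument, citing \cite[Theorem~B.1.13]{Ki01} for the equivalence between pure point spectrum and compactness of the form-domain embedding.
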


Under mild conditions resistance forms on p.c.f. self-similar fractals always lead to Laplacians $\Delta_\mu$ with with pure point spectrum (or, equivalently, with compact resolvent), see for example~\cite[Lemma~3.4.5]{Ki01}. This is essentially due to the fact that for any element of the dense subspace $\mathcal{C}$ there are finite-dimensional approximations given by the resistance condition. Other examples of Laplacians with pure point spectrum arise from resistance forms on (generalized) Sierpinski carpets, \cite{BB, BBKT}.

\begin{proof} Since $\Delta_\mu$ has pure point spectrum, $\widetilde{\domDF}$ is compactly embedded into $L^{2}(\mu)$, see~\cite{Ki01} Theorem~B.1.13. This remains true if $\widetilde{\mathcal{F}}$ is normed by $(\mathcal{E}^a+\|\cdot\|_{L_2(X,\mu)}^2)^{1/2}$. Thus $\Delta_{\mu,a}$ has pure point spectrum and compact resolvent (see again~\cite{Ki01} Theorem~B.1.13). 
\end{proof}

%%%%%%%%%%%%%%%%%%%%%%%%%%%%%%%%%%%%%%%

\section{Locality, local exactness and gauge invariance}\label{sec:gauge}
Using the energy measure representation of the norm in $\Hil$ we determine that the form $\DF^{a}$ from Theorem~\ref{thm:closability} is local, meaning that  $\DF^{a}(f,g)=0$ for all $f,g$ with disjoint supports; this property does not depend on the measure $\mu$. 

Throughout this section we assume that $(X,R)$ is compact, what implies $\mathcal{C}=\mathcal{F}$.

\begin{lemma}
If $a,b,c,d\in\mathcal{F}$ and $\supp(a)\cap\supp(c)\cap\supp(bd)=\emptyset$ then $\langle a\otimes b,c\otimes d\rangle_{\Hil}=0$.
\end{lemma}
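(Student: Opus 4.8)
The plan is to reduce the statement to a locality property of the energy measures, working directly from the representation of the inner product on $\Hil$. By definition $\langle a\otimes b, c\otimes d\rangle_{\Hil} = \int_X bd\, d\Gamma(a,c)$, so the whole argument amounts to locating where the signed measure $\Gamma(a,c)$ is concentrated and then observing that the integrand $bd$ vanishes on that set. The first step is to invoke the Cauchy--Schwarz inequality for energy measures (see Section~3.2 of \cite{FOT94}): for every Borel set $E$,
\[
|\Gamma(a,c)|(E) \leq \Gamma(a)(E)^{1/2}\,\Gamma(c)(E)^{1/2}.
\]
This shows that $|\Gamma(a,c)|$ is absolutely continuous with respect to both $\Gamma(a)$ and $\Gamma(c)$, and hence is concentrated on $\supp\Gamma(a)\cap\supp\Gamma(c)=\Hsupp(a\otimes\mathds{1})\cap\Hsupp(c\otimes\mathds{1})$.

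Next I would relate these supports to the function supports $\supp(a)$ and $\supp(c)$ using strong locality. If $U$ denotes the open complement of $\supp(a)$, then $a\equiv 0$ is constant on $U$, so strong locality forces $\Gamma(a)(U)=0$; recalling that $\Hsupp(a\otimes\mathds{1})^{c}$ is the union of the open sets on which $\Gamma(a)$ vanishes, this gives $\supp\Gamma(a)\subseteq\supp(a)$, and likewise $\supp\Gamma(c)\subseteq\supp(c)$. Combined with the previous step, $|\Gamma(a,c)|$ is concentrated on $\supp(a)\cap\supp(c)$.

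Finally, since $(X,R)$ is compact we have $\mathcal{F}=\mathcal{C}$, so $b,d$ are continuous and $bd$ is a continuous function that vanishes at every point of $\supp(a)\cap\supp(c)$: any such point lies outside $\supp(bd)$ by the disjointness hypothesis, hence in an open set on which $bd\equiv 0$. Therefore the integrand is zero $|\Gamma(a,c)|$-almost everywhere, and
\[
\langle a\otimes b, c\otimes d\rangle_{\Hil} = \int_X bd\, d\Gamma(a,c) = \int_{\supp(a)\cap\supp(c)} bd\, d\Gamma(a,c) = 0.
\]

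The only non-formal point, and the one where the hypotheses genuinely enter, is the passage from the function supports to the supports of the energy measures, i.e.\ the inclusion $\supp\Gamma(a)\subseteq\supp(a)$ (and its analogue for $c$), which is precisely where strong locality of $\DF$ is used. The Cauchy--Schwarz bound for $\Gamma(a,c)$ and the pointwise vanishing of $bd$ on $\supp(a)\cap\supp(c)$ are then routine, so I expect no serious obstacle beyond correctly invoking the locality of the energy measure.
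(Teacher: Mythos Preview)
Your argument is correct. The paper, however, takes a shorter and more direct route: it simply expands
\[
2\langle a\otimes b,c\otimes d\rangle_{\Hil}
= 2\int_X bd\, d\Gamma(a,c)
= \DF(bda,c)+\DF(a,bdc)-\DF(ac,bd)
\]
and kills each of the three terms by a single application of strong locality of $\DF$, since $\supp(bda)$, $\supp(bdc)$, $\supp(ac)$ are respectively disjoint from $\supp(c)$, $\supp(a)$, $\supp(bd)$. Your approach instead localizes the \emph{measure} $\Gamma(a,c)$: you use the Cauchy--Schwarz inequality for energy measures to pin $|\Gamma(a,c)|$ to $\supp\Gamma(a)\cap\supp\Gamma(c)$, then invoke strong locality once more (in the form $\supp\Gamma(a)\subset\supp(a)$) before integrating a function that vanishes there. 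This is a perfectly legitimate and slightly more conceptual route, and it has the advantage of isolating the reusable fact that $\Gamma(a,c)$ is carried by $\supp(a)\cap\supp(c)$; the cost is the extra appeal to the Cauchy--Schwarz bound for energy measures, which the paper's three-term expansion avoids entirely. One small point of exposition: the step ``strong locality forces $\Gamma(a)(U)=0$'' is true, but strong locality is formulated for $\DF$, not for $\Gamma$; you should spell out that for compactly supported $h\in\mathcal{C}$ with $\supp(h)\subset U$ one has $ha\equiv 0$ and $h$ vanishes on a neighborhood of $\supp(a^2)$, so that $2\int h\,d\Gamma(a)=2\DF(ha,a)-\DF(a^2,h)=0$.
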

\begin{proof}
Write out the expression in terms of the energy measure
\begin{align*}
	2\langle a\otimes b,c\otimes d\rangle_{\Hil}
	&= 2\int_{X} bd \, d\Gamma(a,c)\\
	&= \DF(bda,c)+\DF(a,bdc)-\DF(ac,bd).
	\end{align*}
which is zero by locality of $\DF$ and the support assumption. 
\end{proof}

\begin{theorem}
Fix $a\in\Hil$.  If the conditions of Theorem~\ref{thm:closability} are satisfied and $f,g\in\domDF$ have disjoint supports then $\DF^{a}(f,g)=0$.
\end{theorem}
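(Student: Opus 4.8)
The plan is to expand $\DF^{a}(f,g)$ into four inner products and show that each vanishes separately, in every case reducing the question to the Lemma just proved. Using the sesquilinearity of $\langle\cdot,\cdot\rangle_{\Hil}$ together with $\partial f=f\otimes\mathds{1}$ and the convention $fa$ for the action of $f$ on $a$, I would write
\[
	\DF^{a}(f,g)=\langle\partial f,\partial g\rangle_{\Hil}-i\langle\partial f,ga\rangle_{\Hil}+i\langle fa,\partial g\rangle_{\Hil}+\langle fa,ga\rangle_{\Hil}.
\]
Since $(X,R)$ is compact we have $\mathcal{C}=\mathcal{F}=\widetilde{\mathcal{F}}$ and $\mathds{1}\in\mathcal{F}$, so every object appearing is legitimate, $f$ and $g$ are continuous with well-defined supports, and the preceding Lemma applies.

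The first term is $\langle f\otimes\mathds{1},g\otimes\mathds{1}\rangle_{\Hil}=\DF(f,g)$; applying the Lemma with the quadruple $(f,\mathds{1},g,\mathds{1})$ the relevant triple intersection is $\supp(f)\cap\supp(g)\cap\supp(\mathds{1})=\supp(f)\cap\supp(g)=\emptyset$, so it is zero. For the remaining three terms the difficulty is that $a$ is a general element of $\Hil$, not a simple tensor. I would handle this by density: choose finite sums $a_{n}=\sum_{k}c_{k}\otimes d_{k}$ with $c_{k},d_{k}\in\mathcal{C}$ and $a_{n}\to a$ in $\Hil$. Because the action of a fixed $f\in\mathcal{C}$ on $\Hil$ is bounded by~\eqref{E:boundedactions}, we have $a_{n}f\to fa$ and $a_{n}g\to ga$ in $\Hil$, so by continuity of the inner product it suffices to establish the vanishing for each $a_{n}$, that is, for the simple tensors $c_{k}\otimes d_{k}$.

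Using the right action $(c_{k}\otimes d_{k})f=c_{k}\otimes(d_{k}f)$ from~\eqref{E:actions}, the typical contributions to the last three terms become $\langle f\otimes\mathds{1},\,c_{k}\otimes(d_{k}g)\rangle_{\Hil}$, $\langle c_{k}\otimes(d_{k}f),\,g\otimes\mathds{1}\rangle_{\Hil}$ and $\langle c_{k}\otimes(d_{k}f),\,c_{l}\otimes(d_{l}g)\rangle_{\Hil}$. In each of these the product of the ``middle'' factors has support inside $\supp(f)\cap\supp(g)$: indeed $\supp(d_{k}g)\subset\supp(g)$, $\supp(d_{k}f)\subset\supp(f)$, and $\supp(d_{k}f\cdot d_{l}g)\subset\supp(f)\cap\supp(g)$, all empty by hypothesis. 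Hence in every case the condition $\supp(a)\cap\supp(c)\cap\supp(bd)=\emptyset$ of the Lemma is satisfied (the factor $\supp(bd)$ alone already being empty), and each inner product is zero. Summing the four vanishing terms gives $\DF^{a}(f,g)=0$.

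The main obstacle is purely bookkeeping rather than conceptual: passing from a general $a\in\Hil$ to finite sums of simple tensors via density and continuity of the action, and then verifying that the pointwise products of continuous functions keep their supports within $\supp(f)\cap\supp(g)$. Once the supports are tracked correctly, every term falls directly under the Lemma, and no further input from the measure $\mu$ is needed, consistent with the remark that locality of $\DF^{a}$ is independent of $\mu$.
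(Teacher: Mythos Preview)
Your proof is correct and follows essentially the same approach as the paper: expand $\DF^{a}(f,g)=\DF(f,g)+\mathcal{B}(f,g)$, approximate $a$ by finite sums of simple tensors in $\mathcal{C}\otimes\mathcal{C}$, use boundedness of the module action~\eqref{E:boundedactions} to pass to the limit, and apply the preceding Lemma to each resulting inner product of simple tensors. The only cosmetic difference is that the paper handles the term $\DF(f,g)$ by invoking locality of $\DF$ directly rather than via the Lemma, whereas you route it through the Lemma with the quadruple $(f,\mathds{1},g,\mathds{1})$; both are fine since $\mathds{1}\in\mathcal{F}$ under the standing compactness hypothesis of the section.
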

\begin{proof}
Recall $a\in\Hil$ can be approximated by linear combinations of the form $\sum_{j=1}^{n}a_{j}\otimes b_{j}$ with $a_{j},b_{j}\in\mathcal{C}$.  Then $af$ is approximated by $\sum_{j=1}^{n}a_{j}\otimes b_{j}f$, see~\eqref{E:boundedactions}.  Similarly $ag$ is approximated by $\sum_{j=1}^{n}a_{j}\otimes b_{j}g$.  Applying the previous lemma, for all $j$
\begin{gather*}
	\langle \sum_{j=1}^{n}a_{j}\otimes b_{j}f,  g\otimes \mathds{1} \rangle_{\Hil}=0\\
	\langle f\otimes \mathds{1}, \sum_{j=1}^{n}a_{j}\otimes b_{j}g\rangle_{\Hil}=0\\
	\langle \sum_{j=1}^{n}a_{j}\otimes b_{j}f,\sum_{j=1}^{n}a_{j}\otimes b_{j}g \rangle_{\Hil} =0
	\end{gather*}
and by taking limits using~\eqref{E:boundedactions} we see that $B(f,g)$ as considered in (\ref{E:B}) is zero,
\[\mathcal{B}(f,g)=i \langle af,\partial g\rangle_{\Hil} -i \langle \partial f,ag\rangle_{\Hil} + \langle af,ag\rangle_{\Hil}=0.\]
Since $\mathcal{E}(f,g)=0$ by locality, the result follows. 
\end{proof}

Our next goal is to show that modifying the magnetic field by adding a gradient is equivalent to conjugating the associated magnetic form by an exponential.  This property is called gauge invariance (and the gradient is referred to as a gauge field).  We need a trivial lemma.

\begin{lemma}\label{lem:zeroderivimpliesconst}
If $f\in\domDF$ and $\|\partial f\|_{\Hil}=0$ then $f$ is constant.
\end{lemma}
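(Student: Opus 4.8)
The claim is that for $f \in \domDF$, if $\|\partial f\|_{\Hil} = 0$ then $f$ is constant. The plan is to reduce this directly to the defining property (RF\ref{RF1}) of resistance forms, namely that $\DF(u,u) = 0$ holds if and only if $u$ is constant.

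First I would recall the identity~\eqref{E:normandenergy}, which states $\|\partial f\|_{\Hil}^{2} = \DF(f)$ for $f \in \mathcal{C}$, and note that by construction $\partial$ extends to $\widetilde{\domDF}$ (and, since we are in the compact case assumed in this section, $\mathcal{C} = \mathcal{F} = \widetilde{\mathcal{F}}$ in the sense of distinguished representatives). Hence the identity $\|\partial f\|_{\Hil}^{2} = \DF(f)$ persists for all $f \in \domDF$. Therefore the hypothesis $\|\partial f\|_{\Hil} = 0$ is equivalent to $\DF(f) = \DF(f,f) = 0$.

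Second, I would invoke axiom (RF\ref{RF1}) directly: a resistance form satisfies $\DF(u,u) = 0$ if and only if $u$ is constant. Applying this with $u = f$ gives that $f$ is constant, which is exactly the assertion. One point that deserves a word of care is that the axioms are stated for real-valued functions, whereas the surrounding discussion allows complex-valued $f$; here I would split $f = f_{1} + i f_{2}$ and use the complexified identity $\DF(f) = \DF(f_{1}) + \DF(f_{2}) \geq 0$ recorded earlier, so that $\DF(f) = 0$ forces $\DF(f_{1}) = \DF(f_{2}) = 0$, whence both $f_{1}$ and $f_{2}$ are constant by (RF\ref{RF1}) and thus $f$ is constant.

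This lemma is genuinely trivial once the identification $\|\partial f\|_{\Hil}^{2} = \DF(f)$ is in hand, so there is no real obstacle; the only thing to be mindful of is that I am not claiming a pointwise gradient vanishes, but rather that the energy (total mass of $\Gamma(f)$, which equals $\|\partial f\|_{\Hil}^2$) vanishes, and then the resistance-form axiom does all the work. The reason the statement is worth isolating is its role in the gauge-invariance arguments that follow, where one needs to know that a form with vanishing derivation norm carries no magnetic information.
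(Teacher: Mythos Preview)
Your proof is correct and follows exactly the same route as the paper: use $\|\partial f\|_{\Hil}^{2}=\DF(f)$ and invoke (RF\ref{RF1}). Your additional remark handling the complex-valued case via $\DF(f)=\DF(f_{1})+\DF(f_{2})$ is a nice point of care that the paper leaves implicit.
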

\begin{proof}
Recall $\|\partial f\|_{\Hil}^{2}=\DF(f)$, so this follows from (RF\ref{RF1}).
\end{proof}

%% removed piece proving exponential exists - see end of doc

We also recall the following result of LeJan regarding strong local forms.
\begin{theorem}[Theorem 3.2.2. in~\protect{\cite{FOT94}}]\label{thm:LeJan}
If $\Phi\in C^{1}(\mathbb{R}^{m})$  with $\Phi(0)=0$ and $u=(u_{1},\dotsc,u_{n})\in\domDF^{n}$ with all $u_i$ bounded then $\Phi(u)\in\domDF$ is bounded and for all bounded $f\in\domDF$
\begin{equation}\label{eqn:LeJan}
	d\Gamma(\Phi(u),f) = \sum_{j=1}^{n} \frac{\partial\Phi}{\partial x_{j}}(u) \, d\Gamma(u_{j},f)
	\end{equation}
Without condition $\Phi(0)=0$ the function $\Phi(u)$ is a member of $\domDF_{\text{loc}}$, the space of functions which are locally in $\domDF$ in the sense that on any open set with compact closure they agree with a function from $\domDF$. Formula (\ref{eqn:LeJan}) remains valid in this case. The assumption that the $u_{j}$ are bounded can be removed if all partial derivatives of $\Phi$ are bounded. 
\end{theorem}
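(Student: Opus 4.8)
The plan is to prove the formula first for polynomials, where it reduces to the derivation (Leibniz) rule for the energy measure, and then to reach a general $\Phi\in C^{1}$ by a $C^{1}$-approximation argument on the range of $u$. Since adding a constant to $\Phi$ alters neither $\DF$ nor any energy measure, I may assume $\Phi(0)=0$ for the main assertion. The basic identity I would use is
\[
	d\Gamma(vw,f)=v\,d\Gamma(w,f)+w\,d\Gamma(v,f),\qquad v,w,f\in\mathcal{C},
\]
which follows from the relation $(vw)\otimes c=v\otimes wc+w\otimes vc$ recorded after the construction of the $1$-forms: pairing the special case $(vw)\otimes h=v\otimes wh+w\otimes vh$ with $f\otimes\mathds{1}$ in $\Hil$ yields the displayed derivation rule after varying the bounded Borel function $h$. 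Iterating it handles every monomial in $u_{1},\dots,u_{n}$, and by linearity of $\Gamma$ in its first argument the formula (\ref{eqn:LeJan}) then holds for every polynomial $P$ in place of $\Phi$.

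To pass to a general $\Phi$, I would fix a compact box $K\subset\mathbb{R}^{m}$ containing the range of $u$ (possible because the $u_{j}$ are bounded) and choose polynomials $P_{k}$ with $P_{k}\to\Phi$ and $\nabla P_{k}\to\nabla\Phi$ uniformly on $K$; such a $C^{1}$-approximation is standard (mollify, then take Bernstein polynomials). The quantitative heart of the argument is the bound
\[
	\DF\bigl(Q(u)\bigr)\leq\Bigl(\sup_{K}|\nabla Q|\Bigr)^{2}\sum_{i}\DF(u_{i})
\]
for polynomials $Q$. To obtain it, set $\sigma=\sum_{i}\Gamma(u_{i})$ and write $d\Gamma(u_{i},u_{j})=Z_{ij}\,d\sigma$; the matrix $(Z_{ij})$ is $\sigma$-almost everywhere symmetric, non-negative definite, and of unit trace, because $\sum_{i,j}\xi_{i}\xi_{j}\,d\Gamma(u_{i},u_{j})=d\Gamma\bigl(\sum_{i}\xi_{i}u_{i}\bigr)\geq0$ for every $\xi\in\mathbb{R}^{m}$ and $\sum_{i}d\Gamma(u_{i})=d\sigma$. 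Applying the polynomial chain rule to $Q(u)$ and estimating the resulting quadratic form pointwise by $|\nabla Q(u)|^{2}\,\tr(Z)$ yields the bound. Taking $Q=P_{k}-P_{l}$ shows that $\{P_{k}(u)\}$ is $\DF$-Cauchy; since also $P_{k}(u)\to\Phi(u)$ in $L^{2}(X,\mu)$ by uniform convergence, completeness of the form (RF\ref{RF2}) gives $\Phi(u)\in\domDF$ together with $P_{k}(u)\to\Phi(u)$ in $\domDF$.

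It remains to pass to the limit in the polynomial identity. For bounded continuous $h$ the left-hand sides converge because $g\mapsto\int h\,d\Gamma(g,f)$ is energy-continuous by the Cauchy--Schwarz inequality for energy measures, $|\int h\,d\Gamma(g,f)|\leq\|h\|_{\sup}\,\DF(g)^{1/2}\DF(f)^{1/2}$, while the right-hand sides converge because $\frac{\partial P_{k}}{\partial x_{j}}(u)\to\frac{\partial\Phi}{\partial x_{j}}(u)$ uniformly and each $\Gamma(u_{j},f)$ is a fixed finite signed measure; equating the integrals against all such $h$ delivers (\ref{eqn:LeJan}). When $\Phi(0)\neq0$ I would localize: on a relatively compact open set a cut-off reduces to the previous case, so $\Phi(u)\in\domDF_{\text{loc}}$, and since the formula is local it persists. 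If every partial derivative of $\Phi$ is bounded, the displayed energy bound holds with $K=\mathbb{R}^{m}$, so the boundedness of the $u_{j}$ may be dropped. I expect the main obstacle to be exactly this controlled passage to the limit in energy norm: it rests on the inequality $\DF(Q(u))\leq(\sup_{K}|\nabla Q|)^{2}\sum_{i}\DF(u_{i})$ and hence on the non-negative definiteness of $(Z_{ij})$, which is the structural consequence of strong locality that makes the chain rule valid.
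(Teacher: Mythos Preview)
The paper does not prove this theorem; it is quoted as Theorem~3.2.2 of \cite{FOT94} and used without proof, so there is nothing in the paper to compare against. Your outline is essentially the standard argument given in \cite{FOT94}: establish the derivation rule $d\Gamma(vw,f)=v\,d\Gamma(w,f)+w\,d\Gamma(v,f)$, iterate to polynomials, and pass to $C^{1}$ limits via the energy estimate $\DF(Q(u))\leq(\sup_{K}|\nabla Q|)^{2}\sum_{i}\DF(u_{i})$, which you correctly derive from the non-negative definiteness of the matrix $(Z_{ij})$ of energy densities. For bounded $u_{i}$ this is a complete and correct sketch.

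Your final sentence, however, contains a genuine gap. For unbounded $u_{j}$ you cannot simply set $K=\mathbb{R}^{m}$ in the polynomial step: any approximating polynomial of degree at least two has $\sup_{\mathbb{R}^{m}}|\nabla Q|=\infty$, so your displayed energy bound for $P_{k}(u)-P_{l}(u)$ becomes vacuous and the Cauchy argument collapses. The correct route is to truncate the $u_{j}$ to bounded $u_{j}^{(N)}$, apply the already-proved bounded case to obtain $\Phi(u^{(N)})\in\domDF$ together with the chain rule for it, and then use the global bound on $\nabla\Phi$ (applied to $\Phi$ itself, not to polynomials) to control $\DF(\Phi(u^{(N)}))$ uniformly in $N$; a closedness/weak-compactness argument then yields $\Phi(u)\in\domDF$ and the formula. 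The hypothesis that $\nabla\Phi$ is globally bounded enters precisely at this truncation step, not in the polynomial approximation.
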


In particular we  see that for all $f\in\domDF$ the function $e^{if}-1$ exists in $\domDF$ and satisfies the above.  From the theorem the function $e^{if}$ is only locally in $\domDF$, but in the Hilbert space of $\domDF$ modulo constants it is simply $e^{if}-1$.  Notice also that $\DF(e^{if})=\DF(f)$.

\begin{remark}
Theorem~\ref{thm:LeJan} immediately implies that $\partial \Phi(u)$ is a member of $\mathcal{H}$ (or locally a member) and 
\begin{equation}\label{eqn:LeJan2}
\partial \Phi(u)=\sum_{j=1}^n\frac{\partial\Phi}{\partial x_{j}}(u)\, \partial u_j
\end{equation}
holds (globally or locally, respectively).
\end{remark}

\begin{theorem}\label{thm:gaugeinvariance}
Let $\lambda\in\domDF$.  The solution set of $(\partial+i\partial\lambda)f=0$ consists of the constant multiples of $e^{-i\lambda}$.  Moreover we have gauge invariance: $\DF^{a}(e^{i\lambda}f)=\DF^{a+\lambda}(f)$.  In particular
\begin{equation*}
	\Delta_{\mu,a+\partial\lambda} = e^{-i\lambda} \Delta_{\mu,a} e^{i\lambda}.
	\end{equation*}
so that $\Delta_{\mu,a}$ and $\Delta_{\mu,a+\partial\lambda}$ have the same spectrum and their domains are related by multiplication by $e^{i\lambda}$.
\end{theorem}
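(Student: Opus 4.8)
The plan is to handle the three assertions in turn; throughout I read the shifted field as $a+\partial\lambda\in\Hil$, so the displayed identity $\DF^a(e^{i\lambda}f)=\DF^{a+\lambda}(f)$ is understood with superscript $a+\partial\lambda$. Since $(X,R)$ is compact we have $\domDF=\mathcal C=\widetilde{\domDF}$ and all functions involved are bounded, so every product below lies in $\domDF$ and the Leibniz rule~\eqref{E:Leibniz} together with the LeJan formula~\eqref{eqn:LeJan2} apply, extended to complex-valued arguments by linearity. To identify the kernel of $\partial+i\partial\lambda$, I would apply~\eqref{eqn:LeJan2} with $\Phi(x)=e^{-ix}$ to get $\partial e^{-i\lambda}=-ie^{-i\lambda}\partial\lambda$, whence $(\partial+i\partial\lambda)e^{-i\lambda}=0$ and every constant multiple of $e^{-i\lambda}$ solves the equation. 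For the converse, given a solution $f$ I would set $g=e^{i\lambda}f\in\domDF$ and use Leibniz together with $\partial e^{i\lambda}=ie^{i\lambda}\partial\lambda$ to compute
\[
\partial g = e^{i\lambda}\partial f + f\,\partial e^{i\lambda} = e^{i\lambda}\bigl(\partial f + if\,\partial\lambda\bigr) = e^{i\lambda}(\partial+i\partial\lambda)f = 0,
\]
so that $g$ is constant by Lemma~\ref{lem:zeroderivimpliesconst} and $f$ is a constant multiple of $e^{-i\lambda}$.

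For gauge invariance I would first establish the $\Hil$-identity
\[
(\partial+ia)(e^{i\lambda}f) = e^{i\lambda}\,\bigl(\partial+i(a+\partial\lambda)\bigr)f,
\]
by the same use of Leibniz and $\partial e^{i\lambda}=ie^{i\lambda}\partial\lambda$, pulling the scalar $e^{i\lambda}$ through the action by associativity and commutativity of the $\mathcal C$-action on $\Hil$ and collecting the terms $ife^{i\lambda}\partial\lambda$ and $i(e^{i\lambda}f)a$ into $e^{i\lambda}\,if(a+\partial\lambda)$. The second ingredient is that multiplication by the unimodular $e^{i\lambda}$ is an \emph{isometry} of $\Hil$: because the right and left actions coincide, for $\omega=\sum_j u_j\otimes v_j$ one has $e^{i\lambda}\omega=\sum_j u_j\otimes(v_je^{i\lambda})$, and the energy-measure form of the inner product gives
\[
\|e^{i\lambda}\omega\|_{\Hil}^2 = \sum_{j,k}\int_X v_j\bar v_k\,|e^{i\lambda}|^2\,d\Gamma(u_j,u_k) = \|\omega\|_{\Hil}^2,
\]
since $|e^{i\lambda}|^2=1$; density then extends this to all $\omega\in\Hil$. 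Combining the two yields $\DF^a(e^{i\lambda}f)=\|e^{i\lambda}(\partial+i(a+\partial\lambda))f\|_{\Hil}^2=\DF^{a+\partial\lambda}(f)$.

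Finally I would polarize to obtain $\DF^{a+\partial\lambda}(f,g)=\DF^a(Uf,Ug)$ with $Uf:=e^{i\lambda}f$. As $|e^{i\lambda}|=1$, $U$ is unitary on $L_2(X,\mu)$, and as $e^{i\lambda}-1\in\domDF$ is bounded, $U$ is a bijection of $\widetilde{\domDF}$; both $\DF^{a+\partial\lambda}$ and $\DF^a$ are closed by Theorem~\ref{thm:closability}. Then for $f\in\dom\Delta_{\mu,a+\partial\lambda}$ and any $h=Ug\in\widetilde{\domDF}$,
\[
\DF^a(Uf,h)=\DF^{a+\partial\lambda}(f,g)=-\langle\Delta_{\mu,a+\partial\lambda}f,g\rangle_{L_2(X,\mu)}=-\langle U\Delta_{\mu,a+\partial\lambda}f,h\rangle_{L_2(X,\mu)},
\]
so $Uf\in\dom\Delta_{\mu,a}$ with $\Delta_{\mu,a}Uf=U\Delta_{\mu,a+\partial\lambda}f$; hence $\Delta_{\mu,a+\partial\lambda}=e^{-i\lambda}\Delta_{\mu,a}e^{i\lambda}$, and unitary conjugation gives equality of spectra and $\dom\Delta_{\mu,a}=e^{i\lambda}\dom\Delta_{\mu,a+\partial\lambda}$.

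The step I expect to be the main obstacle is the $\Hil$-computation in the middle paragraph: justifying that the bounded but non-core function $e^{i\lambda}$ may be pulled through the module action (associativity and commutativity of the action of bounded functions on one-forms) and, especially, that this action is a genuine isometry and not merely the contraction furnished by~\eqref{E:boundedactions}. Once these module-theoretic facts are secured, the kernel computation and the passage from the quadratic-form identity to the operator conjugation are routine.
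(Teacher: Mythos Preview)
Your proof is correct and follows essentially the same route as the paper: both use the LeJan chain rule~\eqref{eqn:LeJan2} to compute $\partial e^{\pm i\lambda}=\pm i e^{\pm i\lambda}\partial\lambda$, factor $(\partial+ia)(e^{i\lambda}f)=e^{i\lambda}(\partial+i(a+\partial\lambda))f$, and then use $|e^{i\lambda}|=1$ to drop the multiplier. The paper verifies the kernel by expanding $\|(\partial+i\partial\lambda)(ce^{-i\lambda})\|_{\Hil}^2$ directly via energy measures rather than via the Leibniz rule, and it asserts the isometry step and the operator conjugation without comment; your explicit treatment of these points (the unimodular-multiplier isometry on $\Hil$ and the passage from the form identity to $\Delta_{\mu,a+\partial\lambda}=e^{-i\lambda}\Delta_{\mu,a}e^{i\lambda}$ via unitarity of $U$) simply fills in what the paper leaves to the reader.
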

\begin{proof}
Theorem~\ref{thm:LeJan} implies $e^{-i\lambda}$ exists and is locally in $\domDF$.  For any constant $c$ we compute by~\eqref{eqn:LeJan} 
\begin{align*}
	\big\| \partial (ce^{-i\lambda}) + ie^{-i\lambda}\partial(c\lambda) \bigr\|_{\Hil}^{2}
	&= \big\| ce^{-i\lambda}\otimes\mathds{1}   \bigr\|_{\Hil}^{2} + \big\|  ie^{-i\lambda}(c\lambda\otimes\mathds{1}) \bigr\|_{\Hil}^{2}	+2 \bigl\langle ce^{-i\lambda} \otimes\mathds{1} , ie^{-i\lambda}(c\lambda\otimes\mathds{1})  \bigr\rangle_{\Hil}\\
	&= \int d\Gamma(ce^{-i\lambda}) + \int (ie^{-i\lambda})^{2}\,d\Gamma(c\lambda) + 2\int (ie^{-i\lambda}) d\Gamma(ce^{-i\lambda},c\lambda)\\
	&=\int  (-ice^{-i\lambda})^{2} \,d\Gamma(\lambda) + \int c^{2}(ie^{-i\lambda})^{2}\,d\Gamma(\lambda) + 2\int  (ie^{-i\lambda})(-ice^{-i\lambda}) c\,d\Gamma(\lambda)\\
	&=0
	\end{align*}
so that $ce^{-i\lambda}$ is a solution to the equation.  Conversely if $f$ is any solution then a similar computation gives $\partial(e^{i\lambda}f)=0$ so by Lemma~\ref{lem:zeroderivimpliesconst} $f$ is a constant multiple of $e^{-i\lambda}$.

For the gauge invariance we can use~\eqref{eqn:LeJan2} to compute
\begin{align*}
	\DF^{a}(e^{i\lambda}f)
	&=\langle (\partial+ia)e^{i\lambda}f,  (\partial+ia)e^{i\lambda}f \rangle_{\Hil}\\
	&= \langle e^{i\lambda} (\partial + i(a+\partial\lambda))f, e^{i\lambda} (\partial + i(a+\partial\lambda))f \rangle_{\Hil}\\
	&= \langle  (\partial +i(a+\partial\lambda))f,  (\partial + i(a+\partial\lambda))f \rangle_{\Hil}\\
	&=\DF^{a+\partial\lambda}(f),
	\end{align*}
from which the asserted results are immediate.
\end{proof}

\begin{corollary}\label{cor:canconjugateexactforms}
If $a$ is exact, meaning $a=\partial\lambda$ for some $\lambda\in\domDF$  then $\DF^{a}(e^{-i\lambda}f) = \DF(f)$ and $\Delta_{\mu,a}=e^{i\lambda}\Delta_{\mu}e^{-i\lambda}$.
\end{corollary}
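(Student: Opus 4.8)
The plan is to read off Corollary~\ref{cor:canconjugateexactforms} as the special case $a=\partial\lambda$ of the gauge invariance already established in Theorem~\ref{thm:gaugeinvariance}, so that essentially no new work is required. First I would record the elementary identity $\DF^{0}=\DF$: indeed $\partial_{0}=\partial$, and by~\eqref{E:normandenergy} we have $\|\partial f\|_{\Hil}^{2}=\DF(f)$, which polarizes to $\DF^{0}(f,g)=\DF(f,g)$ on $\domDF$. This is precisely the target form on the right-hand side of both asserted identities.

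For the form statement I would apply the gauge invariance identity $\DF^{a}(e^{i\lambda}f)=\DF^{a+\partial\lambda}(f)$ with the gauge field $\lambda$ replaced by $-\lambda$, which is legitimate since $-\lambda\in\domDF$. This yields $\DF^{a}(e^{-i\lambda}f)=\DF^{a-\partial\lambda}(f)$, and specializing to $a=\partial\lambda$ collapses the right-hand side to $\DF^{0}(f)=\DF(f)$, giving exactly the claimed $\DF^{a}(e^{-i\lambda}f)=\DF(f)$.

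For the operator statement I would similarly specialize the conjugation relation $\Delta_{\mu,a+\partial\lambda}=e^{-i\lambda}\Delta_{\mu,a}e^{i\lambda}$ of Theorem~\ref{thm:gaugeinvariance} to the base field $0$, so that the left-hand side becomes $\Delta_{\mu,\partial\lambda}$ and the right-hand side conjugates $\Delta_{\mu}$ by the exponentials. Equivalently, and more robustly, I would derive the intertwining intrinsically from the polarized form identity $\DF^{a}(e^{-i\lambda}f,e^{-i\lambda}g)=\DF(f,g)$ together with the defining relation $\DF^{a}(u,v)=-\langle\Delta_{\mu,a}u,v\rangle_{L_{2}(X,\mu)}$ and the fact that multiplication by $e^{\pm i\lambda}$ is unitary on $L_{2}(X,\mu)$, its $L_{2}$-adjoint being multiplication by the complex conjugate exponential because $\lambda$ is real-valued. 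Testing against $f\in\dom\Delta_{\mu}$ and arbitrary $g\in\widetilde{\domDF}$ then reads off both the conjugation formula and the accompanying domain correspondence, in which $\dom\Delta_{\mu,a}$ is the image of $\dom\Delta_{\mu}$ under multiplication by an exponential; the equality of spectra follows at once from the unitary equivalence.

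The computation is otherwise routine, and the single point demanding genuine care is the sign and complex-conjugate bookkeeping in passing from the form to the operator: one must keep straight which of $e^{i\lambda}$ and $e^{-i\lambda}$ conjugates $\Delta_{\mu}$ into $\Delta_{\mu,a}$, and in which order, since a spurious sign in the exponent can easily slip in here. I would therefore fix the orientation by the explicit $L_{2}$ test described above rather than by transcribing it from the verbal form of the gauge identity.
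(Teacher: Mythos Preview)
Your approach is correct and matches the paper, which gives no separate proof but treats the corollary as an immediate specialization of Theorem~\ref{thm:gaugeinvariance}. Your caution about the exponent sign is well placed: carrying out the $L_{2}$ test you describe (or simply setting $a=0$ in the operator identity of Theorem~\ref{thm:gaugeinvariance}) yields $\Delta_{\mu,\partial\lambda}=e^{-i\lambda}\Delta_{\mu}e^{i\lambda}$, so the exponent signs in the printed operator identity of the corollary are a typo.
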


It should be noted that in this circumstance the standard properties of resistance forms carry over to $\DF^{a}$ via the conjugation.  For example, for any non-empty $Y\subset X$ there is a Green function $g_{Y}:X\times X\to\mathbb{R}$ such that $\DF(g_{Y}(x,\cdot),f(\cdot))=f(x)$ for all $f\in\domDF$ that vanish on $Y$ (see~\cite{Ki12}).  It is readily verified that then $\DF^{a}(e^{-i\lambda}g_{Y}(x,\cdot),e^{-i\lambda}f(\cdot))=f(x)$ for the same $f$, which is the same as saying we can solve $\Delta_{\mu,a}u=f$ on $X\setminus Y$ and $u=0$ on $Y$ using conjugation with $e^{i\lambda}$ and integration against $g_{Y}$.

Corollary~\ref{cor:canconjugateexactforms} also has consequences for studying $\DF^{a}$.  Recall that the exact forms span a subspace $\{\partial f:f\in\domDF\}$ of $\Hil$. By (RF\ref{RF2}) it is closed. Any $a\in\Hil$ may then be written as the sum of an exact form and a form orthogonal to the exact forms. We may use Corollary~\ref{cor:canconjugateexactforms} to conjugate away the exact part of $a$, so it suffices to study $\DF^{a}$ when $a\in\Hil$ is orthogonal to the exact forms. Such forms $a$ are usually referred to as \emph{Coulomb gauges}.

We may improve Theorem~\ref{thm:gaugeinvariance} to fields that are only locally exact provided there is a non-trivial $f\in\domDF$ solving $\DF^{a}(f)=0$, but we need certain additional assumptions on $X$.

\begin{assumption}\label{assum:localexactetc}\ 
\begin{enumerate}
\item $X$ is locally connected, and
\item If $U\subset X$ is open and connected and $f\in\domDF$ satisfies $(\partial f)\mathds{1}_{U}=0$ in $\Hil$ then $f$ is constant on $U$.
\end{enumerate}
\end{assumption}
\begin{remark}
If $X$ has a finitely ramified cell structure as in~\cite{T08} then the latter two hypotheses are both true.  In particular these hold for the class of post-critically finite self-similar fractals of Kigami~\cite{Ki01}.
\end{remark}

\begin{definition}\label{defn:locexact}
We say $a\in\Hil$ is locally exact if there is an open cover $\cup_{j}U_{j}$ and functions $\lambda_{j}\in\domDF$ such that $a\mathds{1}_{U_{j}}=(\partial\lambda_{j})\mathds{1}_{U_{j}}$ for all $j$.
\end{definition}
It is worth noting that in many cases of interest locally exact forms are not typical.  For example, on fractal gaskets and carpets there is non-trivial topology at all locations and scales, see \cite{HT} and \cite{IRT}.

\begin{theorem}\label{thm:DFa=0}
Let $(X,\DF)$ satisfy Assumption~\ref{assum:localexactetc}.  Fix real-valued $a,b\in\Hil$ and suppose the hypotheses of Theorem~\ref{thm:closability} are satisfied for both $a$ and $b$.  Further suppose $a$ is locally exact.  If there is a non-zero $f\in\domDF$ such that $\DF^{a}(f)=0$ then $|f|$ is constant on $X$ and the set $\{g:\DF^{a}(g)=0\}$ consists of the constant multiples of $f$.  Taking $f_{0}$ with $\DF^{a}(f_{0})=0$ and $|f_{0}|=1$ we have gauge invariance $\DF^{a+b}(f_{0}f) = \DF^{b}(f)$.  In particular $\Delta_{\mu,a+b}= f_{0}^{-1} \Delta_{\mu,b} f_{0}$, so these operators have the same spectrum and their domains are related by multiplication by $f_{0}$.
\end{theorem}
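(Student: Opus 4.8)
The plan is to use the local exactness of $a$ to reduce, chart by chart, to the already-understood exact case treated in Theorem~\ref{thm:gaugeinvariance}, and then to patch the local conclusions together using connectedness. Throughout I note that $\DF^{a}(f)=0$ is equivalent to $(\partial+ia)f=0$ in $\Hil$. By Definition~\ref{defn:locexact} there is an open cover $\{U_{j}\}$ and real-valued $\lambda_{j}\in\domDF$ with $a\mathds{1}_{U_{j}}=(\partial\lambda_{j})\mathds{1}_{U_{j}}$, and using local connectedness (Assumption~\ref{assum:localexactetc}(1)) I would refine this cover so that each $U_{j}$ is connected.

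On a fixed $U_{j}$ I would compute, using the Leibniz rule~\eqref{E:Leibniz}, the LeJan identity $\partial e^{i\lambda_{j}}=ie^{i\lambda_{j}}\partial\lambda_{j}$ coming from~\eqref{eqn:LeJan2}, and the multiplicativity and commutativity of the module action, that
\[
\bigl(\partial(e^{i\lambda_{j}}f)\bigr)\mathds{1}_{U_{j}}
=e^{i\lambda_{j}}\bigl((\partial+i\partial\lambda_{j})f\bigr)\mathds{1}_{U_{j}}
=e^{i\lambda_{j}}\bigl((\partial+ia)f\bigr)\mathds{1}_{U_{j}}=0,
\]
the middle equality being exactly where $a\mathds{1}_{U_{j}}=(\partial\lambda_{j})\mathds{1}_{U_{j}}$ is used. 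Since $X$ is compact the bounded function $e^{i\lambda_{j}}f$ lies in $\domDF$, so applying Assumption~\ref{assum:localexactetc}(2) to its real and imaginary parts shows $e^{i\lambda_{j}}f$ equals a constant $c_{j}$ on $U_{j}$; hence $f=c_{j}e^{-i\lambda_{j}}$ and, as $\lambda_{j}$ is real, $|f|=|c_{j}|$ on $U_{j}$.

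This already makes $|f|$ locally constant, so by connectedness of $X$ it is constant, and it is nonzero because $f\not\equiv0$; this is the first claim, and it shows $f$ vanishes nowhere. For any other $g$ with $\DF^{a}(g)=0$ the same computation gives $g=d_{j}e^{-i\lambda_{j}}$ on $U_{j}$, whence $g/f=d_{j}/c_{j}$ is locally constant and therefore constant on the connected space $X$; this proves the solution set is $\mathbb{C}f$. Normalizing, $f_{0}:=f/|f|$ satisfies $|f_{0}|=1$ and $\DF^{a}(f_{0})=0$. I expect the \emph{main obstacle} to be precisely this globalization step: one must ensure the charted data $c_{j}e^{-i\lambda_{j}}$ are mutually compatible across overlaps and that $f$ is nowhere zero so that $g/f$ makes sense, and one must apply Assumption~\ref{assum:localexactetc}(2) correctly to a complex-valued, locally-$\domDF$ function rather than a real one.

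For the gauge identity I would first record that $(\partial+ia)f_{0}=0$ gives $\partial f_{0}=-iaf_{0}$, and then use the Leibniz rule to obtain the identity in $\Hil$
\[
(\partial+i(a+b))(f_{0}f)=f_{0}\partial f+f\partial f_{0}+i(a+b)f_{0}f=f_{0}\bigl(\partial f+ibf\bigr)=f_{0}\,(\partial+ib)f,
\]
the $a$-contributions cancelling. Because $|f_{0}|=1$ is bounded and continuous, multiplication by $f_{0}$ is an isometry of $\Hil$ (apply~\eqref{E:boundedactions} to both $f_{0}$ and $\bar f_{0}$, using $\bar f_{0}f_{0}=1$), so $\DF^{a+b}(f_{0}f)=\|f_{0}(\partial+ib)f\|_{\Hil}^{2}=\|(\partial+ib)f\|_{\Hil}^{2}=\DF^{b}(f)$. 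The operator statement relating $\Delta_{\mu,a+b}$ and $\Delta_{\mu,b}$ by conjugation with $f_{0}$, the equality of spectra, and the description of the domains by multiplication by $f_{0}$ then follow from this quadratic-form identity exactly as in the proof of Theorem~\ref{thm:gaugeinvariance}, using that $f_{0}^{-1}=\bar f_{0}$ implements a unitary on $L_{2}(X,\mu)$.
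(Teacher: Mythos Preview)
Your proof is correct and follows essentially the same route as the paper: refine the locally exact cover to connected charts, use the Leibniz/LeJan identities to show $e^{i\lambda_j}f$ has vanishing derivative on $U_j$, invoke Assumption~\ref{assum:localexactetc}(2) to get $f=c_je^{-i\lambda_j}$ locally, patch via connectedness, and then derive the gauge identity from $(\partial+ia)f_0=0$ together with $|f_0|=1$. The only cosmetic difference is that you establish one-dimensionality of the kernel by forming the locally constant quotient $g/f$, whereas the paper phrases the same conclusion via the overlap condition $\lambda_j-\lambda_k\in2\pi\mathbb{Z}$; both are equivalent once $|f|$ is known to be a nonzero constant.
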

\begin{proof}
Local exactness and local connectedness provide a cover $\{U_{j}\}$ of $X$  by connected open sets and corresponding $\lambda_{j}\in\domDF$ so that $a\mathds{1}_{U_{j}}=(\partial\lambda_{j})\mathds{1}_{U_{j}}$ for each $j$. Note that the $\lambda_{j}$ are real-valued.  If $\|(\partial+ia)f\|^{2}_{\Hil}=\DF^{a}(f)=0$ then
\begin{equation*}
	\bigl( \partial (fe^{i\lambda_{j}}) \bigr) \mathds{1}_{U_{j}}
	= \bigl(  (\partial f) e^{i\lambda_{j}} + i(\partial\lambda_{j})e^{\lambda_{j}}f  \bigr)\mathds{1}_{U_{j}} 
	=\bigl( (\partial+ ia)f\bigr) \mathds{1}_{U_{j}}
	=0.
	\end{equation*}
By the third point in Assumption~\ref{assum:localexactetc} we conclude $fe^{i\lambda}$ is constant on $U_{j}$, so $f=c_{j}e^{-i\lambda_{j}}$ for some constant $c_{j}$.  Moreover if $U_{j}\cap U_{k}$ is non-empty we must have $c_{j}e^{-i\lambda_{j}}=c_{k}e^{-i\lambda_{k}}$.  Then $|f|=|c_{j}|=|c_{k}|$, and since any pair of cells are connected by a chain of sets $U_{l}$ with non-empty intersections we see $|f|$ is constant.  There is no loss of generality in taking this constant to be $1$, and the phase in $c_{j}=e^{i\theta_{j}}$ may be absorbed by replacing $\lambda_{j}$ with $\lambda_{j}-\theta_{j}$, for some real constants $\theta_{j}$.  Thus $\DF^{a}(f)=0$ if and only if there is a choice of $\lambda_{j}\in\domDF$ with $a\mathds{1}_{U_{j}}=(\partial\lambda_{j})\mathds{1}_{U_{j}}$ and $f\mathds{1}_{U_{j}}=c e^{-i\lambda_{j}}$ for all $j$. The latter is equivalent to $\lambda_{j}-\lambda_{k}\in 2\pi\mathbb{Z}$ when $U_{j}\cap U_{k}\neq\emptyset$. We let $f_{0}$ be the case $c=1$.

For the gauge invariance,
\begin{align*}
	\DF^{a+b}(f_{0}f)
	&= \| (\partial+ia+ib)(f_{0}f) \|^{2}_{\Hil}\\
	&= \| f (\partial + ia)f_{0} + f_{0} (\partial+ ib) f \|^{2}_{\Hil}\\
	&= \|f_{0} (\partial+ ib) f \|^{2}_{\Hil}
	= \| (\partial+ ib) f \|^{2}_{\Hil} = \DF^{b}(f)
	\end{align*}
because $(\partial+ia)f_{0}=0$ in $\Hil$ and $|f_{0}|=1$ everywhere.
\end{proof}

\begin{remark}
Of course the situation where there is non-trivial $f$ so $\DF^{a}(f)=0$ is also that where we can define $e^{ia}$ to be equal $e^{i\lambda_{j}}$ on each $U_{j}$, because this definition is legitimate if and only if on those $U_{j}\cap U_{k}\neq\emptyset$ one has $\lambda_{j}-\lambda_{k}$ equal to an integer multiple of $2\pi$.
\end{remark}

\section{Examples}\label{sec:examples}
The only classical Dirichlet spaces that are also resistance spaces are one-dimensional, but interest in resistance spaces has developed substantially since it was realized that many classes of fractal sets also admit resistance forms.  Our theory is applicable to these examples provided the measure is sufficiently well-behaved, and in practice the latter limitation is minor because the natural choices of measure have the necessary properties.

\begin{example}[Circle with fractal mass]
The unit circle with the form $\DF(u)=\int |u'|^{2}$ is a regular doubling resistance space and the form is strongly local.  In the case that the measure $\mu$ is Lebesgue measure we merely recover the usual  theory of magnetic fields on the circle, however our approach is also applicable to a doubling measure $\mu$ that is singular to the Lebesgue measure, because we then satisfy the hypotheses of Theorem~4.1.  One natural class of such measures is obtained by viewing functions on the circle as periodic functions on the unit interval and the latter as a post-critically finite self-similar space under a finite collection of similarities mapping the interval to a union of subintervals. For most choices of  self-similar structure the corresponding Bernoulli measure is both doubling and singular to the Lebesgue measure.  In this sense our results apply to magnetic fields on the circle with mass given by a fractal Bernoulli measure.
\end{example}

\begin{example}[Postcritically finite self-similar sets with Bernoulli measures]
The circle with a self-similar measure is a very special case in the general class of postcritically finite self-similar sets, see \cite{Ki01}.  There are much larger classes for which the existence of a regular Dirichlet form is known (e.g. the nested fractals defined by Lindstr\"{o}m and their generalizations~\cite{Lind,Kumagai}).  These forms are generally self-similar, so the metric doubling condition reduces to comparability of a finite set of resistance scaling coefficients.  The spaces are compact so our theory is applicable once the measures are doubling.  When a  self-similar measure is used the doubling condition again amounts to verifying comparability of a finite set of coefficients.  The canonical example of such a set is the Sierpinski Gasket.
\end{example}

\begin{example}[Sierpinski Carpets]
There are considerable technical difficulties in constructing Dirichlet forms on self-similar sets with infinite ramification, but this has been done successfully for a class of highly symmetric Sierpinski Carpets~\cite{BB}.  In many cases the construction gives a unique resistance form,~\cite{BB, KusuokaZhou}. If we consider a Bernoulli self-similar measure then our results apply.
\end{example}

\begin{example}[Fractafolds]
Strichartz~\cite{Strichartzfractafolds} has proposed a notion of fractafolds based on self-similar fractals and has studied those based on the Sierpinski Gasket.  In particular this allows us to consider non-compact spaces which are locally like the fractals in the previous examples.  The simplest type considered in~\cite{Strichartzfractafolds} are based on post-critically finite self-similar fractals and have a cell-structure, meaning that the fractafold is a union of cells that are copies of the underlying fractal, perhaps with some rescaling of the resistance or measure.  It is not hard to see that in the case where the measure and resistance scalings for a cell are within bounds independent of the cell then the measure estimate~\eqref{eqn:loweruniform} holds and our theory applies.  Alternatively one could construct a fractafold for which measure doubling holds but~\eqref{eqn:loweruniform} fails, in which case our theory applies to compactly supported magnetic fields.
\end{example}


\begin{thebibliography}
\normalsize

\bibitem{BB}
M. Barlow, R. Bass, \emph{Brownian motion and harmonic analysis on Sierpinski carpets.} Canad. J. Math. {\bf51} (1999), 673--744.

\bibitem{BBKT}
M. Barlow, R. F. Bass, T. Kumagai, A. Teplyaev, \emph{Uniqueness of Brownian motion on Sierpi\`nski carpets},
J. Eur. Math. Soc. (JEMS) 12 (2010), no. 3, 655--701.

\bibitem{BBST}
O. Ben-Bassat, R.S. Strichartz, A. Teplyaev, \emph{What is not in the domain of the Laplacian on a Sierpinski gasket type fractal}, J. Funct. Anal. {\bf166} (1999), 197--217.

\bibitem{CS03}
F. Cipriani, J.-L. Sauvageot, \emph{Derivations as square roots of Dirichlet forms},
J. Funct. Anal. {\bf201} (2003), 78--120.

\bibitem{CS09}
F. Cipriani, J.-L. Sauvageot, \emph{Fredholm modules on p.c.f. self-similar fractals and their conformal geometry},
Comm. Math. Phys. {\bf286} (2009), 541--558.

\bibitem{FOT94}
M. Fukushima, Y. Oshima and M. Takeda, \textit{Dirichlet forms and symmetric Markov processes},
deGruyter, Berlin, New York, 1994.

\bibitem{FSh92}
M. Fukushima, T. Shima, \emph{On a spectral analysis for the Sierpinski gasket},
Pot. Anal. {\bf 1} (1992), 1-35.

\bibitem{Hino03}
M. Hino, \emph{On singularity of energy measures on self-similar sets},
Probab. Theory Relat. Fields {\bf 132} (2005), 265-290.

\bibitem{Hino05}
M. Hino, K. Nakahara, \emph{On singularity of energy measures on self-similar sets II},
Bull. London Math. Soc. {\bf 38} (2006), 1019-1032.

\bibitem{HRT}
M. Hinz, M. R\"ockner, A. Teplyaev, \emph{Vector analysis for  Dirichlet forms and quasilinear PDE and SPDE on metric measure spaces}, Stoch. Proc. Appl. {\bf 123}(12) (2013), 4373-4406.


\bibitem{HT}
M. Hinz, A. Teplyaev, \emph{Local Dirichlet forms, Hodge theory, and the 
Navier-Stokes equations on topologically one-dimensional  fractals},
Trans. Amer. Math. Soc. {\bf 367} (2015), 1347-1380.

\bibitem{HTc}
M. Hinz, A. Teplyaev, \emph{Dirac and magnetic Schr\"odinger operators on fractals},
J. Funct. Anal. {\bf 265}(11) (2013), 2830-2854.

\bibitem{IRT}
M. Ionescu, L. Rogers, A. Teplyaev, \emph{Derivations and Dirichlet forms on fractals.},
J. Funct. Anal. {\bf 263} (2012), no. 8, 2141--2169. 

\bibitem{Ki01}
J. Kigami, \emph{Analysis on Fractals}, Cambridge Univ. Press, Cambridge, 2001.

\bibitem{Ki03}
J. Kigami, \emph{Harmonic analysis for resistance forms}, J. Funct. Anal. {\bf 204} (2003), 525--544.

\bibitem{Ki12}
J. Kigami, \emph{Resistance forms, quasisymmetric maps and heat kernel estimates},
Mem. Amer. Math. Soc. 216 (2012), no. 1015.

\bibitem{KiLa93}
J. Kigami, M. Lapidus, \emph{Weyl's problem for the spectral distribution of Laplacians on p.c.f. self-similar fractals}, Comm. Math. Phys. {\bf 158}, 93-135.

\bibitem{Kumagai}
T. Kumagai, \emph{Estimates of transition densities for Brownian motion on nested fractals},
Probab. Theory Related Fields 96 (1993), no. 2, 205--224.


\bibitem{KusuokaZhou}
S. Kusuoka,  Y. Zhou, \emph{Dirichlet forms on fractals: Poincar\'{e} constant and resistance},
Probab. Theory Related Fields 93 (1992), no. 2, 169--196.

\bibitem{Lind}
T. Lindstr{\o}m, \emph{Brownian motion on nested fractals.}
Mem. Amer. Math. Soc. 83 (1990), no. 420.

\bibitem{RS}
M. Reed, B. Simon, \emph{Methods of Modern Mathematical Physics, vol. 2}, Acad. Press, San Diego 1980.

\bibitem{Strichartzfractafolds}
R. S. Strichartz, \emph{Fractafolds based on the Sierpinski Gasket and their spectra},
Trans. Amer. Math. Soc. 355 (2003), no. 10, 4019--4043.

\bibitem{T08}
A. Teplyaev, \emph{Harmonic coordinates on fractals with finitely ramified cell structure}, Canad. J. Math. 60 (2008), 457--480.

\end{thebibliography}
\end{document}